\newtheorem{proposition}{Proposition}
\begin{document}

\title{Joint inference of genome structure and content in heterogeneous tumor samples}

\author{
	Andrew McPherson$^{1,2}$
	Andrew Roth$^{2,3}$,
	Gavin Ha$^{2,3}$,
	Sohrab P. Shah$^{2}$,
	Cedric Chauve$^{4}$,
	S.~Cenk~Sahinalp$^{1,5}$\\
	\footnotesize
	$^{1}$~School of Computing Science, Simon Fraser University, BC, Canada \\
	$^{2}$~Department of Molecular Oncology, BC Cancer Agency, BC, Canada \\
	$^{3}$~Bioinformatics Graduate Program, University of British Columbia, BC, Canada \\
	$^{4}$~Department of Mathematics, Simon Fraser University, BC, Canada \\
	$^{5}$~School of Informatics and Computing, Indiana University, Bloomington, IN, USA
}

\maketitle

\begin{abstract}
For a genomically unstable cancer, a single tumour biopsy will often contain a mixture of competing tumour clones.
These tumour clones frequently differ with respect to their genomic content (copy number of each gene) and structure (order of genes on each chromosome).
Modern bulk genome sequencing mixes the signals of tumour clones and contaminating normal cells, complicating inference of genomic content and structure.
We propose a method to unmix tumour and contaminating normal signals and jointly predict genomic structure and content of each tumour clone.
We use genome graphs to represent tumour clones, and model the likelihood of the observed reads given clones and mixing proportions.
Our use of haplotype blocks allows us to accurately measure allele specific read counts, and infer allele specific copy number for each clone.
The proposed method is a heuristic local search based on applying incremental, locally optimal modifications of the genome graphs.
Using simulated data, we show that our method predicts copy counts and gene adjacencies with reasonable accuracy.
\end{abstract}

\section{Introduction}

Human cells have evolved DNA repair mechanisms to mitigate the effects of DNA breakage during transcription and replication.
During the lifetime of many cancers, one or more of these mechanisms will be compromised.
With DNA repair compromised, DNA breakages will either go unrepaired or will be repaired by a less accurate but still functional mechanism.
Improper repair of DNA breakage events will result in structural chromosomal aberrations in descendent tumour cell lineages.
Structural aberrations can then lead to further problems during mitosis, with incorrect segregation of DNA to daughter cells resulting in numerical chromosomal aberrations \cite{Burrell:2013aa}.
If both daughter cells are viable, incorrect segregation leads to a divergence in chromosomal content between the two descendent lineages.
The progressive acquisition of structural and numerical chromosome aberrations is referred to as genomic instability.

A direct consequence of genome instability is intra-tumour heterogeneity \cite{Burrell:2013aa}.
A sample of a biopsy from a genomically unstable cancer will contain tumour cells from lineages that have diverged in the structure and content of their chromosomes.
Significant changes do not usually out-pace cell division.
Thus collections of cells, referred to as clones, will be genomically similar.
Bulk sequencing of such a sample mixes the signals from tumour clones and contaminating normal cells.
An important problem in cancer genomics is the unmixing of these signals, and reconstruction of the structure and content of the genomes of each clone.
The key difficulty of the problem is that mixing dilutes the signal of the changes of interest, often to a level approaching that of the noise in the data.

Existing methods focus on accurate modeling of the number of the copies of each reference genome segment in the sequenced tumour clone or clones.
A simple model of genome structure predominates for most tools: segments in the model are adjacent only if they are also adjacent in the reference genome.
An additional copy of a segment is implicitly modeled as a copied and truncated chromosome, when in reality it may be a tandem duplication resident on the original chromosome.
Theta and Theta2 \cite{Oesper:2012vn,Oesper:2014fj} infer the copy number of tumour clone genomes and mixing proportions of tumour clones and contaminating normal cells.
Both tools assume a-priori knowledge of large segments of the genome with identical clone specific copy number, and model adjacent segments independently.
Titan \cite{Ha:2014fr} uses an HMM to model spacial correlation between segments adjacent in the reference genome, however the state space of the HMM is restricted to allow only one aberrant genotype per segment.
A similar method, CloneHD \cite{Fischer:2014zl} uses an factorial-HMM with a more comprehensive state space.

Simplified models of connectivity are reasonable for genomic profiles using array based technologies.
With whole genome sequencing, tumour specific adjacencies, or breakpoints, are readily available and can be predicted with reasonable accuracy using a variety of tools.
Breakpoints provide the potential for a more comprehensive model of genome structure that includes long range connectivity between genomic segments.
A important question in computational biology is the extent to which a more comprehensive model of genome structure has the potential to improve copy number inference.
Furthermore, a method that integrates both copy number and breakpoints could provide additional information about each breakpoint: whether the breakpoint is real or a false positive, the prevalence of the breakpoint in the clone mixture, and the number of chromosomes harboring the breakpoint per clone.

Some progress has been made on more comprehensive modeling of genome structure in tumour clones.
\cite{Mahmoody:2012ve} proposes an algorithm to infer missing adjacencies in a mixture of rearranged tumour genomes, however they do not model copy number.
\cite{Zerbino:2013pd} proposes a framework for sampling from the rearrangement history of tumour genomes.
\cite{Oesper:2012vn} proposes PREGO, a method for inferring the copy number of segments and breakpoints using a genome graph based approach, though they do not model normal contamination or tumour heterogeneity, limiting applicability of their method to real tumours.

Additional progress in the ability to infer divergent genome structure remains relevant to cancer research.
Subclonal copy changes (changes in a subset of clones) are difficult to assess with single cell methods, whereas coincident subclonal breakpoints could be more easily assessed for their presence in individual cells.
Given multiple samples, an ability to identify subclonal events will enable more accurate tracking of complicated patterns of metastasis, such as sample heterogeneity produced by multiple-metastases to the same anatomic site \cite{Cooper:2015aa}.
Furthermore, subclonal changes represent contemporary events, whereas ubiquitous changes represent historical events.
The ability to discern historical from contemporary breakpoints may provide insight into the repair mechanisms that have been active historically versus those that were active at the time the tumour was sampled.

We propose a method for joint inference of genomic content and structure given tumour sequencing data and a set or predicted breakpoints.
Our method is built upon two important assumptions.
First we assume that intelligent aggregation can be used to gain additional statistical strength, and increased power to detect changes in a minor tumour clone.
Thus, we choose a larger segmentation than competing methods.
To avoid the additional noise associated with a true copy number change occuring in the middle of a segment, we augment our segmentation with breakends of predicted breakpoints, with the intention of capturing the majority of true copy number changes across the genome.
Furthermore, we use counts of reads for alleles of each haplotype block, rather than for alleles of each heterozygous SNP, increasing statistical strength for inference of allele specific copy number changes.
Second, we assume that a more comprehensive model of genome structure that includes the long range connectivity implied by rearrangement breakpoints will improve inference accuracy.
We model the likelihood of observing the sequencing data given a mixture of genome graphs, where each genome graph represents the content and structure of the genomes of tumour clones.
Our method can be considered a natural extension of the factorial-HMM used by cloneHD, which can be thought of as a mixture of HMMs for modeling the clone mixture.

We show using simulated genomes that our method out-performs Titan, Theta2 and CloneHD for inference of clone specific copy number.
We also compare our method against two breakpoint naive methods for inferring copy number of segments and breakpoints: a factorial HMM, and a model assuming indepenence between segments.
For the breakpoint naive methods, we post-hoc assign breakpoint copy number based on segment copy number.
We show that integration of breakpoints improves inference of segment copy number given the same data, and that post-hoc assignment of breakpoint copy number is less accurate than using an integrated model.

\section{Problem Definition}

Assume as given whole genome sequence data from tumour and matched normal samples.
Suppose we have predicted a set of tumour specific rearrangement breakpoints with reasonable accuracy.
We aim to predict the following:
\begin{enumerate}
	\item mixture proportions of tumour clones and normal cells
	\item clone and allele specific copy number of genomic segments
	\item clone specific copy number of rearrangement breakpoints
\end{enumerate}

The strongest signal of copy number change is from segment specific differences in counts of reads aligned concordantly to the reference genome.
We thus model the likelihood of concordant read counts for large segments.
A likelihood model on its own will over-fit to the data, thus we also impose a structure on the clone specific copy number by modeling each clone as a genome graph in a mixture of genome graphs.
A more detailed description of the problem is provided below.

\subsection{Mixtures of Genome Graphs}

Assume as input a set of alignments of uniquely mapped concordant reads, and a set of putative breakpoints predicted from discordant reads.
A segmentation of the genome is not assumed.
We segment the genome using a relatively large segment length (3MB for this study), and augment the set of segment boundaries with breakends of rearrangement breakpoints.
More formally, let $\mathcal{V}$ represent the space of all reference nucleotides, with each nucleotide represented as a chromosome position pair.
Impose an arbitrary order on the chromosomes thus making elements of $\mathcal{V}$ sortable.
Each concordant read alignment $r=(x,y); x, y \in \mathcal{V}$ is a pair of positions representing the start and end of the alignment of the read in the genome.
Assume a set of breakpoints $B$ have been identified from analysis of the discordant reads.
Each breakpoint $b \in B$; $b = (v_j, v_k)$; $v_j, v_k \in W$ is a pair of segment extremities putatively adjacent in a tumour chromosome, but not adjacent in the reference genome.

Let $S$ be a set of $N$ segments defined on a set of $2N$ ordered segment extremities $W \subset \mathcal{V}$.
Segment extremities include boundaries of regular segments and breakends ($\cup_{(v_j, v_k) \in B} \{v_j, v_k\}$).
Segment $s_n \in S$ is defined by a pair of extremities $s_n=(v_{2n-1}, v_{2n})$ such that $v_{2n-1} \leq v_{2n}$.
The length of segment $n$ is thus $l_n = v_{2n} - v_{2n-1} + 1$.
Let $A$ be the set of reference adjacencies, such that $a=(v_{2n}, v_{2n+1}) \in A$ if and only if $v_{2n}$ and $v_{2n+1}$ are on the same chromosomes.
We assume the segmentation covers the entire genome, thus $\forall a \in A; a=(v_{2n}, v_{2n+1}): v_{2n}+1 = v_{2n+1}$.

In construction of the genome graph representation of rearranged tumour genomes we will need an artificial construct to represent tumour specific chromosome ends which we call telomeres \footnote{acknowledging the biological inaccuracy of applying the term telomere}.
Let $s_{N+1} = (v_{2N+1}, v_{2N+2})$ be a dummy telomere segment between two dummy extremities $v_{2N+1}$ and $v_{2N+2}$.
Define a telomere as an adjacency between a real segment extremity and either $v_{2N+1}$ or $v_{2N+2}$.
Define the space of all telomeres $T$ as the edges of a complete bipartite graph between vertex set $W$ and vertex set $\{v_{2N+1}, v_{2N+2}\}$.

Let vertex set $V=W \cup \{ v_{2N+1}, v_{2N+2} \}$ be the set of segment extremities plus 2 additional dummy telomere vertices.
In defining the \emph{genome graph}, we will deliberately use sets of position pairs interchangeably with their representative edges for $A$, $B$, $S$ and $T$.
Define the \emph{genome graph} $H=\left(V,E=(S,Q)\right); Q=A \cup B \cup T$ as a bi-edge-colored graph on vertex set $V$, where \emph{segment edges} in $S$ are given a color distinct from \emph{bond edges} in $Q$ (represented as thick dashed and thin solid respectively in Figure~\ref{fig:genomegraph}).
Bond edges have 3 classes, \emph{reference} (edge set $A$), \emph{breakpoint} (edge set $B$) and \emph{telomere} (edge set $T$).

\begin{figure}[h]
	\centering
	\begin{centering}
		\includegraphics{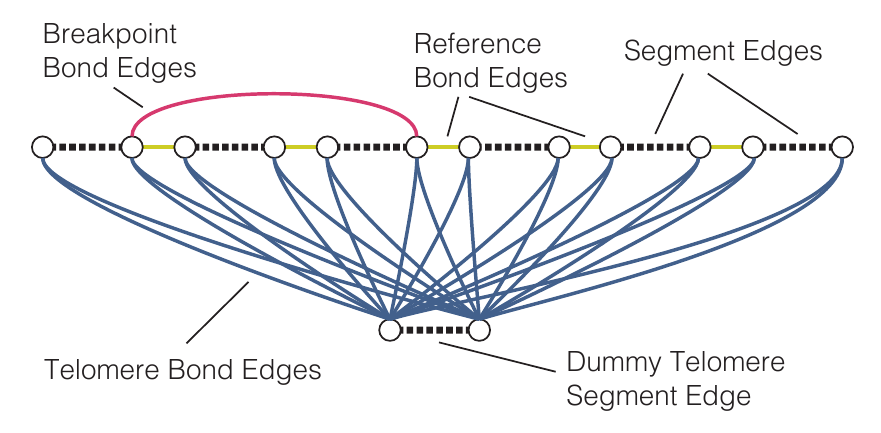}
	\end{centering}
	\caption[Genome Graph]{
		A genome graph on 6 regular segments.
		Segments edges (black, thick dashed) connect vertices representing segment extremities.
		Reference bond edges (yellow, thin solid) connect segments to recapitulate the reference chromosomes.
		Breakpoint bond edges (red, thin solid) represent putative connections between segment ends as identified through analysis of discordant sequencing reads.
		A dummy segment edge (bottom) is used to connect the end points of linear chromosomes via telomere bond edges (blue, thin solid).
		Telomere bond edges form a complete bipartite graph on the pair of vertices incident to the dummy segment edge, and all other vertices.
	}
	\label{fig:genomegraph}
\end{figure}

Define a \emph{linear chromosome} as a sequence of (possibly repeated) oriented segments, and a \emph{circular chromosome} as a cycle of (possibly repeated) oriented segments.
Thus a linear chromosome is an alternating walk in $H$ starting and ending with a segment edge, and a circular chromosome is an alternating tour in $H$.
For algorithmic convenience, we will model all chromosomes as alternating tours.
An alternating walk that represents a linear chromosome can be transformed into a tour by connecting each end of the chromosome to opposite ends of the dummy telomere segment edge using telomere bond edges.
A telomere bond edge is considered \emph{observed} if it is incident to a vertex representing the end of a reference chromosome, with remaining telomere bond edges considered \emph{unobserved}.
All reference and breakpoint bond edges are considered \emph{observed}.

A genome can be represented exactly as a collection of alternating tours in $H$, though such a representation would be unidentifiable based only on adjacency information provided by whole genome sequencing.
Instead, a collapsed representation is more convenient.
Define a \emph{genome instance} $g: E \to \mathbb{N}$ as an assignment of counts to edges in $H$.
A genome instance is \emph{valid} if and only if there exists a collection $\mathcal{T}$ of alternating tours in $H$ for which each edge $e$ appears $g(e)$ times in $\mathcal{T}$.
Alternatively, let $S(v)$ and $Q(v)$ be segment and bond edges incident with vertex $v$ respectively.
A genome instance is valid only if copy number balance condition holds (\cite{Oesper:2012vn}, Equation~\ref{eqn:cnbalance}).
\begin{equation}
\displaystyle
\forall v \in V: \sum_{e \in S(v)} g(e) = \sum_{e \in Q(v)} g(e)
\label{eqn:cnbalance}
\end{equation}

Call a genome instance that obeys the copy number balance condition as \emph{balanced}.
A genome instance $g$ may be balanced but not valid if for some edge $e$, $g(e) < 0$.
Balanced genome instances will be important as modifications of valid genome instances.

Define a \emph{genome collection} $\mathcal{G}$ as a collection of $M$ genome instances.
Let $R_m$ be the number of reads contributed by cell population $m$ in a heterogeneous tumour sample.
We assume each of the $R_m$ reads are sampled uniformly from the length $L_m$ of the genome of cell population $m$.
Thus, each cell population contributes a specific number of reads per nucleotide, $h_m = \frac{R_m}{L_m}$, to the sequencing experiment.
We call $h_m$ the \emph{haploid read depth} since it is the read depth contributed by a single copy of a segment.
Define a \emph{genome mixture} as a pair $(\mathcal{G},h)$, where $\mathcal{G}$ is a collection of genomes and $h$ is a vector of haploid read depths, one per genome.

The expected read count of segment $n$ will thus depend on: 
a) the length of the segment
b) the copy number of the segment in each population
c) the haploid read depth of each population.
Observed read counts will be subject to sampling error, thus we can form a likelihood of expected read counts given observed read counts.

Our goal is to identify a genome mixture that:
\begin{enumerate}
\item maximizes the likelihood of the observed read counts
\item minimizes the number of unobserved telomere bond edges used by each tumour genome
\end{enumerate}

\subsection{Allele Specific Genome Graph}

We model the copy number of each parental allele for each segment.
Complete loss of a parental allele, termed Loss Of Heterozygousity (LOH), is of particular biological interest as such events frequently occur as part of a `double hit' targeting a tumour suppressor gene.
A method that infers the specific copy number of each allele will enable us to identify such biologically important events.
Furthermore, many previous methods have shown increased performance when modelling allele specific versus total copy number.

We define the \emph{allele specific genome graph} $H'$ to jointly model both genome structure and allele specific copy number.
Construct $H'=(V',E')$, $E'=S' \cup Q'$, from genome graph $H$ by duplicating edges and vertices for alleles 1 and 2.
For each vertex $v_n \in V$ create two vertices $v_{n,1}, v_{n,2}$ in $V'$.
For each segment edge $s_n=(v_{2n-1}, v_{2n}) \in S$, create two segment allele edges $s_{n,1}=(v_{2n-1,1}, v_{2n,1})$ and $s_{n,2}=(v_{2n-1,2}, v_{2n,2})$ in $S'$.
For each bond edge $e=(v_j, v_k) \in Q$, create four bond edges $(v_{j,1}, v_{k,1})$, $(v_{j,1}, v_{k,2})$, $(v_{j,2}, v_{k,1})$, $(v_{j,2}, v_{k,2})$ in $Q'$.
Subsequent sections will refer to the allele specific genome graph and its edges and vertices as $H$, $E=S\cup Q$ and $V$ respectively.

\subsection{Haplotype Allele Specific Read Counts}

Heterozygous germline SNPs can be used to classify reads (or previously array probe intensities) into one of the two parental alleles.
Classification into paternal/maternal is impossible without parental DNA, and somewhat irrelevant in the context of this work.
Thus parental alleles and associated read counts are distinguished from each other as \emph{major} and \emph{minor} according to which allele has more copies, or which class of reads has more counts.

Considerable power to estimate allele specific copy number can be gained by using \emph{haplotype} information.
Let the $x_i \in \{0,1\}$ be a binary indicator representing the two possible alleles of heterozygous SNP $i$.
A \emph{haplotype block} $h=(i,k,y), y \in \{0,1\}^k$ is a sequence of alleles $\left( x_i=y_1 ,.., x_{i+k-1}=y_k \right)$ for $k$ consecutive SNPs starting at $i$, where the sequence of alleles exist consecutively on the same physical chromosome.
The \emph{alternate haplotype block} $\bar h = \left( x_i=\bar y_1 ,.., x_{i+k-1}=\bar y_k \right)$ represents the other of the two parental chromosomes (here $\bar x = 1-x$).

Major and minor read counts can be grouped by haplotype block for increased statistical strength.
Specifically, call a read $r$ as \emph{non-conflicting} with $h=(i,k,y)$ if for all $j \in \{ i ,.., i+k-1 \}$ read $r$ matches allele $x_{i+j-1}=y_j$.
Call a read $r$ as \emph{supporting} of $h$ if it is non-conflicting, and contains at least one SNP $j$ from $j \in \{ i ,.., i+k-1 \}$.
Let $z_h$ and $z_{\bar h}$ be counts of reads that support $h$ and $\bar h$ respectively.
Assuming the haplotypes are correct, $z_h$ and $z_{\bar h}$ will provide more accurate representations of allele specific copy number than counts of reads for individual SNPs.
We infer haplotype information using 1000 genomes data and shapeit2 \cite{Delaneau:2012qd}.

\subsection{Genome Mixture Likelihoods}

As described above, assume $\mathcal{G}$ is a collection of $M$ genomes, indexed by $m$, and genome instance $g_m$ is a mapping $g: E \to \mathbb{N}$ assigning a copy number to each edge in the allele specific genome graph $H$.
For convenience, let matrix $c_n$ denote the clone specific copy number state of segment $n$, such that $c_{n m \ell}$ denotes the copy number of segment $n$, genome $m$, allele $\ell$.
In other words, $c_{n m \ell}=g_m(s_{n \ell})$ for segment allele $s_{n \ell}$.

Observed data $X \in \mathbb{N}^{N \times 3}$ are per-segment total and allele specific read counts.
For segment $n$, $x_{n3}$ is the count of concordantly aligning reads contained within segment $n$, and $x_{n1}$ and $x_{n2}$ are the counts of the subsets of those reads classified as from the major or minor allele ($k=1$ and $k=2$ respectively).
As described above, alleles are distinguish as major/minor based on which allele has higher read count, though this does not necessarily relate to which allele has more copies in individual tumour clones.
Modeled alleles $\ell=1$ and $\ell=2$ correspond with measured major ($k=1$) and minor ($k=2$) alleles respectively.

We use haplotype blocks to accurately calculate major and minor read counts.
First, haplotype blocks spanning segment boundaries are split into a separate block per spanned segment.
Second, we calculate supporting read counts $z_{h_{n,i}}$ and $z_{\bar h_{n,i}}$ for each haplotype block $i$ in segment $n$.
Allele specific read counts $x_{n \ell}$ are calculated as given by Equation~\ref{eqn:majorminorcount}.
\begin{eqnarray}
x_{nk} &=&
\begin{cases}
	\displaystyle
	\sum_i \max(z_{h_{n,i}}, z_{\bar h_{n,i}}) &\mbox{if } k = 1 \\
	\displaystyle
	\sum_i \min(z_{h_{n,i}}, z_{\bar h_{n,i}}) &\mbox{if } k = 2
\end{cases} \label{eqn:majorminorcount}
\end{eqnarray}

Let $p_{n \ell k}$ represent, for segment $n$, the proportion of reads from allele $\ell$ that can contribute to measurement $k$, assumed known.
For total read counts ($k=3$), $p_{n k \ell} = 1$ for $\ell \in \{1, 2\}$, since total read count includes all reads from both alleles.
For major and minor read counts, calculate the proportion of reads that can be genotyped by heterozygous SNPs for segment $n$ as given by Equation~\ref{eqn:propallelereads}.
Since reads from allele $\ell=1$ can not contribute to measured read count $x_{n2}$, and visa versa, $p_{n k \ell}=0$ for $k \neq \ell; k \in \{1, 2\}$.
Equation~\ref{eqn:definep} fully specifies $p_{n \ell k}$.
\begin{eqnarray}
\phi_n &=& \frac{x_{n1} + x_{n2}}{x_{n3}} \label{eqn:propallelereads}\\
p_{n k \ell} &=& 
\begin{cases}
	\phi_n &\mbox{if } k = \ell, k \in \{1, 2\} \\
	1 &\mbox{if } k=3 \\
	0 &\mbox{else}
\end{cases} \label{eqn:definep}
\end{eqnarray}

Let $\mu_{nk} \in \mathbb{R}_{> 0}$ model expected major/minor/total read count, calculated as given by Equation~\ref{eqn:expectedreadcount}.
\begin{eqnarray}
\mu_{n k} &=& \sum_m \sum_{\ell} l_n h_m c_{n m \ell} p_{n k \ell} \label{eqn:expectedreadcount}
\end{eqnarray}

We model the likelihood of allele specific or total read counts $x$ given expected read counts $\mu$ using either a Poisson (Equation~\ref{eqn:poissonlikelihood}) or Negative Binomial (Equation~\ref{eqn:nblikelihood}).
The over-dispersion constant for the Negative Binomial is estimated off-line (Section~\ref{sec:overdispersion}).
See Section~\ref{sec:segmentindependence} for a discussion of the independence assumption in comparison to similar models. 
\begin{eqnarray}
p(x | \mu) &=& \frac{ \mu^{x} e^{-\mu} } { x! } \label{eqn:poissonlikelihood} \\
p(x | \mu, r) &=& \frac{\Gamma(r+x)}{\Gamma(x) \Gamma(r)} \left( \frac{r}{r+\mu} \right)^{r} \left( \frac{\mu}{r+\mu} \right)^{x} \label{eqn:nblikelihood}
\end{eqnarray}

\subsection{Prior Probability over Genome Structure}

Positive copy number assigned to a bond edge $e$ by genome instance $g_m$ implies edge $e$ `exists' in the genome mixture.
Observed bond edges are assumed to have higher prior probability of existing.
Thus, we place a prior probability over the number of unobserved edges used by genome instances in a genome mixture.
Let $U \subset Q$ be the set of unobserved bond edges in $H$.
Calculate a prior for the copy number of genome instance $g$ as given by Equation~\ref{eqn:priortelo}.
\begin{eqnarray}
\displaystyle
P(g|\beta) &\propto& \prod_{e \in U} e^{ - \beta g(e) }
\label{eqn:priortelo}
\end{eqnarray}

In log space the above prior amounts to a fixed penalty on each additional copy of an unobserved bond edge.
Such a prior prevents over-fitting of the genome structure, such as a genome instance that assigns positive copy number to many telomere edges in order to be able to fit each segment edge perfectly to the segment likelihood.
Thus the impact of Equation~\ref{eqn:priortelo} is similar to that of a transition matrix in an HMM for modeling spacial correlation, and in fact the genome graph model with no breakpoints has an equivalent representation as an HMM.
Higher values of $\beta$ have the effect of smoothing over false positive deviations in predicted copy number that result from sampling error of identical true copy number.

\subsection{Maximum Posterior Genome Mixtures}

Our overall objective is to identify the genome mixture $(\mathcal{G},h)$ that maximizes the full posterior (Equation~\ref{eqn:fullposterior}) given $\beta$.
\begin{eqnarray}
\displaystyle
p(\mathcal{G}, h | X, \beta) &\propto& p(X | \mathcal{G}, h) \prod_g P(g | \beta)
\label{eqn:fullposterior}
\end{eqnarray}

\section{Method}

\subsection{Method Overview}

We separate the maximum posterior genome mixture problem into two subproblems: 
\begin{enumerate}
	\item learn $h$
	\item infer $\mathcal{G}$ given $h$
\end{enumerate}

For problem 1, we remove breakpoint edges from the genome graph, reducing the graph to a hidden markov model (HMM), making learning of $h$ more tractable.
For problem 2, we use an approximate combinatorial method to infer $\mathcal{G}$ given $h$, initializing at the results of running the viterbi algorithm on the HMM used to learn $h$.

\subsection{Expectation Maximization Method for Learning $h$}

We learn $h$ using the Baum Welch algorithm for learning the parameters of a Hidden Markov Model (see Section~\ref{sec:learnh}).
In brief, we use expectation maximization to find a value a local maxima of the marginal likelihood function (Equation~\ref{eqn:marginallikelihood}).
\begin{eqnarray}
L(X|h) &=& \sum_C p(X,C|h)
\label{eqn:marginallikelihood}
\end{eqnarray}

At iteration $t$, the algorithm calculates the $h^{(t)}$ that maximizes the expected value of the complete data log likelihood $p(X,C|h)$ with respect to the conditional distribution $p(C|X,h^{(t-1)},\cdot)$.
We use multiple restarts with different initial $h^{(0)}$ in an attempt to discover the global maximum.

\subsection{Combinatorial Method for Inferring $\mathcal{G}$}

We propose a greedy algorithm for inferring the full structure of the genome collection $\mathcal{G}$ given $h$ estimated from large segments.
Given a current solution $\mathcal{G}^{(t)}$, our aim is to select from a set of possible modifications that a) are simple to calculate, b) are comprehensive enough to escape local optima, c) produce a valid genome collection $\mathcal{G}^{(t+1)}$ when applied to $\mathcal{G}^{(t)}$.
A valid genome collection is defined as a genome collection for which all genomes $g \in \mathcal{G}$ have positive copy number and obey the copy number balance condition.

Modeling the likelihood of total reads introduces a dependency between the copy number of each allele of the same segment, complicating inference.
For the purposes of inferring $\mathcal{G}$, we model only the likelihood of allele specific read counts, allowing for the independent modeling of the alleles of each segment.
Let $c_e$ be a vector of per clone copy numbers assigned to edge $e$ for each of the $m$ genome instances, such that $c_{em}=g_m(e)$.
Suppose edge $s_{n \ell} \in S$ represents segment $n$, allele $\ell$.
The likelihood of allele specific read counts $x_{nk}$ where $k=\ell$ can be written as $p(x_{nk} | \mu_{n \ell})$ where $\mu_{n \ell}$ is calculated as given by Equation~\ref{eqn:expectedallelereadcount}.
\begin{eqnarray}
\mu_{n \ell} &=& \sum_m l_n h_m c_{em} \phi_n \label{eqn:expectedallelereadcount}
\end{eqnarray}

We define the objective function of our greedy heuristic as given by Equation~\ref{eqn:logpost}.
\begin{eqnarray}
f(\mathcal{G}) &=& - \log p(\mathcal{G}|X,h,\cdot) \nonumber \\
&=& \sum_{e \in E} f_e(c_e) \\
f_e(c_e) &=&
\begin{cases}
- \log p(x_{nk} | \mu_{n \ell}) &\mbox{if } e = s_{n \ell} \in S \\
\sum_m \beta \cdot g_m(e) &\mbox{if } e \in U
\end{cases} \label{eqn:logpost}
\end{eqnarray}

\subsubsection{Representing modifications as edge disjoint sets of alternating cycles}

We propose an algorithm for identifying a modification of $\mathcal{G}^{(t)}$, optimal with respect to $f$, from a restricted set of modifications that increase or decrease the copy number of any edge by at most 1.
Consider valid genome instance $g_1$ and balanced genome instance $g_{\Delta}$.
The set of balanced genome instances is closed under addition and subtraction.
Thus $g_2 = g_1 + g_{\Delta}$ will be a balanced genome instance, and if $g_2(e) = g_1(e) + g_{\Delta}(e) \geq 0 \ \forall e$, then $g_2$ will also be valid.

An obvious set of candidate $g_{\Delta}$ include all $g_{\Delta}$ that set $g_{\Delta}(e) = 1$ (or $g_{\Delta}(e) = -1$) for all edges $e$ in some alternating cycle of $H$.
Such $g_{\Delta}$ would be analogous to adding (or subtracting) a circular chromosome to the genome instance.
However, it is trivial to show for such a restricted set of modifications, a greedy algorithm would easily get stuck in local optima.
In Figure~\ref{fig:modification}A, for instance, transformation of the genome instance on the left to that on the right would not be possible without an intermediate state that assigns 2 copies to each segment edge.
Such an intermediate state may be prohibitively unlikely if there is considerable evidence for the segment to be copy number 1.

\begin{figure}[h]
\centering
\includegraphics{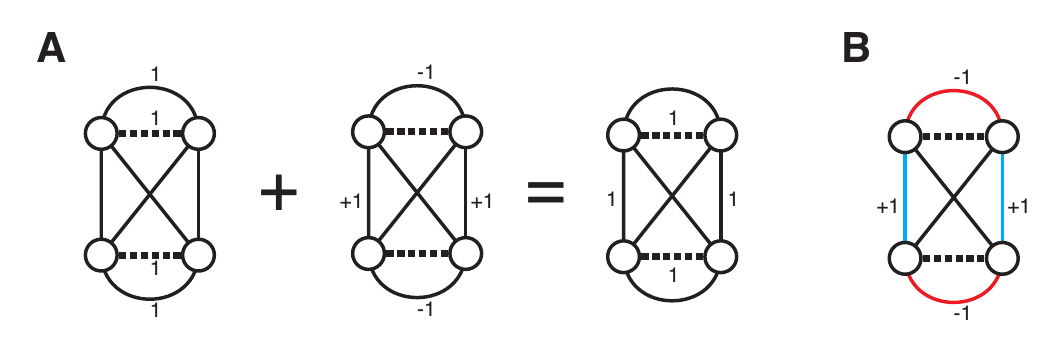}
\caption[Genome Graph Modification]{
Genome graph modifications.  Thick dashed lines show segments edges, thin solid lines show bond edges.
A) Transformation of one genome instance to another via addition of a modification.  Edges are annotated with copy number for edges with non-zero copies.  Note that a transformation performed in two steps, one addition and one subtraction, would likely result in a highly improbable intermediate state with segment edge copies set to 2.
B) Representation of the modification as an alternating cycle in the genome graph.
}
\label{fig:modification}
\end{figure}

Instead, we require candidate $g_{\Delta}$ that both add an subtract edges, as swapping one set of edges for another set will allow easier movement through the space of possible genome collections.
Let $g_+$ be a balanced genome instance constructed such that $g_+(e) = 1$ for all edges in a set of edge disjoint alternating cycles in $H$.
Note that since each vertex is incident to a single segment edge, edge disjoint alternating cycles are also vertex disjoint.
Let $g_-$ be defined similarly to $g_+$, with $g_-(e) = -1$ for a distinct set of edge disjoint alternating cycles in $H$.
Let $g_{\Delta} = g_+ + g_-$.
Call $g_{\Delta}$ thus constructed as a \emph{simple genome modification}.

\begin{proposition}
Any simple genome modification has a representation as an edge disjoint set of alternating cycles in a specific bi-edge-colored graph derived from $H$.
\label{prop:simplemodrep}
\end{proposition}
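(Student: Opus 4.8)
The plan is to build the derived graph by \emph{splitting each edge of $H$ into a positive and a negative copy}, and then to lift the two cycle decompositions that come with a simple genome modification (the one underlying $g_+$ and the one underlying $g_-$) onto the positive and negative copies respectively. Concretely, let $\hat H$ have the same vertex set $V$ as $H$, and for every edge $e \in E$ introduce two parallel edges $e^{+}$ and $e^{-}$, each inheriting the color of $e$ (segment or bond). Then $\hat H$ is again a bi-edge-colored graph, derived canonically from $H$, and every alternating cycle of $H$ has two copies in $\hat H$ — one using only positive edges and one using only negative edges — because color alternation is preserved under $e \mapsto e^{\pm}$, so the image is still a closed walk alternating between the two colors.

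First I would lift the decomposition of $g_+$: each of its edge-disjoint alternating cycles $C \subseteq H$ is sent to the cycle $C^{+} \subseteq \hat H$ obtained by replacing every edge $e \in C$ with $e^{+}$. Since the cycles of $g_+$ are edge-disjoint in $H$ and the map $e \mapsto e^{+}$ is injective, the resulting cycles $C^{+}$ are edge-disjoint in $\hat H$; the same argument applied to $g_-$ with $e \mapsto e^{-}$ yields edge-disjoint cycles $C^{-}$. The two families are automatically edge-disjoint from one another, because a positive copy $e^{+}$ and a negative copy $(e')^{-}$ are distinct parallel edges of $\hat H$ for all $e, e'$. Taking the union of the two families therefore produces a single edge-disjoint set of alternating cycles in $\hat H$.

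It then remains to check that this set of cycles \emph{represents} $g_\Delta$, in the sense that for every edge $e \in E$ the net signed multiplicity — counting $+1$ for each positive copy used and $-1$ for each negative copy used — equals $g_\Delta(e)$. Since the lifted cycles are edge-disjoint, each of $e^{+}$ and $e^{-}$ is used at most once, so the net multiplicity is $[e^{+} \text{ used}] - [e^{-} \text{ used}] = g_+(e) + g_-(e) = g_\Delta(e)$, as required. The one case that needs care is an edge $e$ with $g_\Delta(e) = 0$ arising by cancellation, i.e. $e$ lies in a cycle of $g_+$ \emph{and} in a cycle of $g_-$; here both $e^{+}$ and $e^{-}$ are used and the two cycles meet at the endpoints of $e$. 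This is exactly why the doubling is needed and why I claim only that the cycles are \emph{edge}-disjoint rather than vertex-disjoint: unlike in $H$, a vertex of $\hat H$ carries two incident segment copies, so two alternating cycles may legitimately share a vertex while remaining edge-disjoint. Confirming that this overlap is handled correctly — that the positive and negative cycles can coexist at a shared vertex without breaking color alternation or edge-disjointness — is the main point to get right; everything else is bookkeeping. This overlap is precisely the mechanism that lets a single simple modification simultaneously subtract one set of adjacencies and add another, as in Figure~\ref{fig:modification}A, which motivates the whole construction.
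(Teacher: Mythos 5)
Your construction is internally consistent, but it proves a materially weaker statement than the one the paper needs, and the weakening happens exactly at the point you flag as ``the main point to get right.'' You represent $g_{\Delta}$ as the \emph{uncancelled} union of the lifted $g_+$ cycles and $g_-$ cycles: whenever the two families overlap on an edge $e$, both $e^{+}$ and $e^{-}$ appear, and you recover $g_{\Delta}$ only through a net-multiplicity convention. The paper's Proposition~\ref{prop:simplemodrep} is the statement that the \emph{cancelled} object decomposes into alternating cycles: the signed support of $g_{\Delta}$ itself, with values in $\{-1,0,+1\}$, all zero edges deleted, and each edge of $H$ carrying at most one sign. Proving that requires a genuine argument, which the paper supplies: color each surviving edge by sign crossed with type (segment $+1$ and bond $-1$ red; segment $-1$ and bond $+1$ blue) and show, using the balance condition at each vertex together with the fact that the $g_+$ cycles are vertex disjoint and the $g_-$ cycles are vertex disjoint (each vertex meets only one segment edge), that every vertex is incident to either $0$ or $2$ nonzero edges, and that these two edges always receive opposite colors. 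Your lift sidesteps this cancellation analysis entirely. Note also that you keep the original segment/bond coloring, so each of your cycles is monochromatic in sign; the paper's sign-twisted coloring of $F$ is what permits a single alternating cycle to pass from $+1$ edges to $-1$ edges across a bond--bond transition, which is precisely how the swap of Figure~\ref{fig:modification}A is realized as one cycle in Figure~\ref{fig:modification}B. Under your representation that same modification is two large cycles overlapping along every shared segment edge, not the cancelled cycle through only the changed bonds.

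This is not a cosmetic difference, because everything downstream consumes the cancelled form. The local search optimizes over $z \in \mathbb{T}^E$, which assigns at most one of $+1/-1$ to each edge of $H$, and the constraints of Equations~\ref{eqn:modifyobjconstraint2} and \ref{eqn:modifyobjconstraint3}, like the perfect-matching encoding, enforce at most one selected segment copy and one selected bond copy per vertex --- i.e., the algorithm searches exactly over cancelled, vertex-disjoint cycle sets. A representation using both $e^{+}$ and $e^{-}$ cannot be encoded by any $z_e \in \{-1,0,+1\}$ and places four selected edges at a shared vertex, so it is invisible to the matching. Hence your argument does not establish that every simple genome modification lies in the algorithm's move set, which is the entire role of Proposition~\ref{prop:simplemodrep} (the paper's third proposition cites it verbatim, and the fourth builds a bijection between $z$ and cycle sets using at most one labeled copy per edge). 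To close the gap you would have to show that after deleting the edges where $g_+$ and $g_-$ overlap, the surviving fragments of your two cycle families re-glue into alternating cycles in the twisted coloring --- and that is exactly the vertex-local parity and color analysis that constitutes the paper's proof.
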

\begin{proof}
Edges in $g_{\Delta}$ will have copy number in $\{-1, 0, +1\}$.
Remove all 0 edges from $g_{\Delta}$.
Assign a color to each remaining edge as follows:

\begin{itemize}
\item segment edge $e$ with $g_{\Delta}(e) = +1$ $\to$ red
\item segment edge $e$ with $g_{\Delta}(e) = -1$ $\to$ blue
\item bond edge $e$ with $g_{\Delta}(e) = +1$ $\to$ blue
\item bond edge $e$ with $g_{\Delta}(e) = -1$ $\to$ red
\end{itemize}

Now consider any vertex $v \in H$.
Let $s$ be the segment edge incident with $v$, and let $e_i$ be the $i^{\text{th}}$ bond edge incident with $v$.
Suppose $g_{\Delta}(s) = 0$, but there exists $i$ such that $g_{\Delta}(e_i) \neq 0$.
Then there must exist $i_1, i_2$ such that $g_{\Delta}(e_{i_1}) = +1$ and $g_{\Delta}(e_{i_1}) = -1$, otherwise $g_+$ and $g_-$ would not both be vertex disjoint.
Furthermore, $e_{i_1}$ will be colored blue and $e_{i_2}$ will be colored red.
Suppose $g_{\Delta}(s) = +1$.
Then there exists a single bond edge $e_{i}$ incident with $v$ such that $g_{\Delta}(e_{i}) = +1$ by the copy number balanced condition and the requirement that $g_+$ be vertex disjoint.
Edge $s$ will be colored red and $e_{i}$ will be colored blue.
A similar analysis for a segment edge $s$ with $g_{\Delta}(s) = -1$ shows that $s$ will be colored blue and a single incident bond edge will be colored red.
Thus at any vertex, either 0 or 2 incident edges will have non-negative copy number.
Furthermore, for vertices with 2 incident non-negative copy number edges, the color of the edge as defined above will alternate.
Thus $g_{\Delta}$ can be represented as a set of vertex disjoint alternating cycles.
\end{proof}

We now show how to construct a graph $F$ called the \emph{genome modification graph} (Figure~\ref{fig:modgraph}), such that any vertex disjoint alternating cycle in $F$ represents a valid simple genome modification.
Add a single copy of each vertex in $H$ to $F$ and two copies of each edge in $H$ to $F$.
Label one duplicated edge as `+1' and the other duplicated edge as `-1'.
Color each edge as follows:
\begin{itemize}
\item segment edge $e$ labeled $+1$ $\to$ red
\item segment edge $e$ labeled $-1$ $\to$ blue
\item bond edge $e$ labeled $+1$ $\to$ blue
\item bond edge $e$ labeled $-1$ $\to$ red
\end{itemize}

\begin{figure}[h]
\centering
\includegraphics{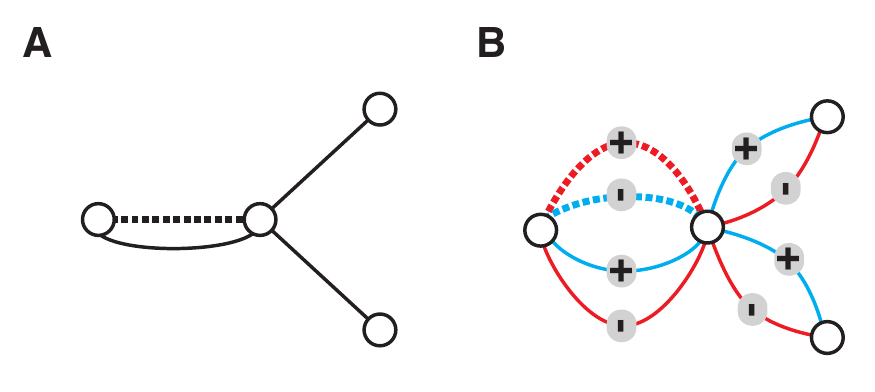}
\caption[Genome Modification Graph]{
A) A small subgraph of a genome graph with a subset of edges is shown.  
B) Transformation of the genome graph to a genome modification graph.  Each edge is duplicated and labelled with `+' or `-'.  Segment edges are colored red for label `+' and blue for label `-'.  Bond edges are given an opposite labelling: blue for label `+' and red for label `-'.  
}
\label{fig:modgraph}
\end{figure}

\begin{proposition}
An edge disjoint set of alternating cycles $\mathcal{C}$ in $F$ represents a valid simple genome modification.
\end{proposition}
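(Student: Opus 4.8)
The plan is to reverse the colouring argument of Proposition~\ref{prop:simplemodrep}. First I would turn the cycle set $\mathcal{C}$ into a candidate modification $g_{\Delta}$ on $H$. Each edge $e$ of $H$ has exactly two preimages in $F$, the copy labelled $+1$ and the copy labelled $-1$; I define $g_{\Delta}(e)$ to be $+1$ if the $+1$-copy lies on some cycle of $\mathcal{C}$, $-1$ if the $-1$-copy does, and $0$ otherwise. Because $\mathcal{C}$ is edge disjoint each copy is traversed at most once, so each copy contributes at most once and $g_{\Delta}(e) \in \{-1,0,+1\}$ for every $e$. This already matches the value range required of a simple genome modification.

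The core step is verifying the copy number balance condition (Equation~\ref{eqn:cnbalance}) for $g_{\Delta}$, and here the colouring of $F$ does all the work. Fix a vertex $v$, let $s$ be its unique segment edge and let the $e_i$ be its bond edges. Reading off the colour rules, a red copy incident to $v$ is either the $+1$-copy of $s$ (adding $+1$ to $g_{\Delta}(s)$) or a $-1$-copy of some $e_i$ (adding $-1$ to $g_{\Delta}(e_i)$), while a blue copy is either the $-1$-copy of $s$ or a $+1$-copy of some $e_i$. A short computation then shows that the balance deficit $g_{\Delta}(s) - \sum_i g_{\Delta}(e_i)$ equals exactly $R(v) - B(v)$, the number of red minus the number of blue edges of $\mathcal{C}$ meeting $v$. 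Since consecutive edges of an alternating cycle have opposite colours, every passage of a cycle through $v$ contributes one red and one blue incidence, so $R(v) = B(v)$ at each vertex and the deficit vanishes. Thus $g_{\Delta}$ is balanced. This is the clean half of the argument and mirrors, in reverse, the case analysis of Proposition~\ref{prop:simplemodrep}.

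The delicate part, and the step I expect to be the main obstacle, is exhibiting $g_{\Delta}$ as an honest simple genome modification, that is, writing $g_{\Delta} = g_+ + g_-$ with $g_+$ supported on a set of edge disjoint alternating cycles in $H$ (all at $+1$) and $g_-$ on a further such set (all at $-1$). Splitting $g_{\Delta}$ into its positive and negative parts does not work directly: a single alternating cycle of $F$ interleaves the red segment-$+1$ and blue bond-$+1$ edges of the positive layer with the blue segment-$-1$ and red bond-$-1$ edges of the negative layer, so where a cycle enters and leaves a vertex on two bond copies the sign-restricted support skips that vertex's segment edge and fails to alternate segment/bond in $H$; the positive layer can even reduce to a single bond edge, which is no cycle at all. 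To repair this I would not split by sign but argue that each layer can be completed into alternating cycles using cancelling edges: since $g_{\Delta}$ is balanced with entries in $\{-1,0,+1\}$, I would use that any balanced nonnegative instance on $H$ decomposes into alternating tours, an Eulerian-style decomposition that relies on the telomere bonds $H$ supplies at every vertex to close otherwise open alternating walks, and apply it after adding a common set of edges to the positive and negative layers so that those edges cancel in $g_{\Delta}$ while making each of $g_+$ and $g_-$ a union of alternating cycles. Validity is then immediate: for any valid $g_1$ with $g_1 + g_{\Delta} \ge 0$ the sum is balanced, because balance is closed under addition, and balanced together with nonnegative yields a valid genome instance by the same tour-decomposition fact, so $\mathcal{C}$ represents a valid simple genome modification.
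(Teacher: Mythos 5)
Your first half is correct and is essentially the paper's mechanism read in reverse: extracting $g_{\Delta}$ from the labels of $\mathcal{C}$, and checking the balance condition (Equation~\ref{eqn:cnbalance}) at each vertex by counting red versus blue incidences of $\mathcal{C}$, is sound (the paper leaves this count implicit in its construction). You have also correctly diagnosed the crux: splitting $g_{\Delta}$ by sign fails because a cycle of $F$ switches between the $+1$ and $-1$ layers exactly where it traverses two bond copies at a vertex, leaving open alternating walks in each layer. Your repair --- adding a common set of edges to both layers so that they cancel in $g_{\Delta}$ while closing each layer into alternating cycles --- is precisely what the paper does: its proof collects the set $W$ of transition vertices (those with two incident bond edges of $\mathcal{C}$, even in number) and pairs them up by a set of edge disjoint alternating paths $\mathcal{P}$, each either a single segment edge $(v,u)$ with $u \in W$, or a three-edge segment--bond--segment path between two vertices of $W$; the edges of $\mathcal{P}$ are then placed in both $g_+$ and $g_-$, where they cancel.

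The genuine gap is that in your write-up this common edge set is asserted rather than constructed, and the one concrete device you name --- closing the open walks through the telomere bonds --- would fail whenever more than two closures are needed. A telomere bond only reaches the dummy vertices, and an alternating continuation there is forced onto the dummy segment edge, so every telomere closure consumes a copy of that edge; since $g_+$ must take value $1$ on a set of edge disjoint alternating cycles (which, because each vertex carries a single segment edge, are automatically vertex disjoint), the dummy segment edge can absorb at most one closure per allele copy, while $|W|/2$ can be arbitrarily large. The paper's pairing construction avoids any such shared bottleneck by routing each closure through its own segment edges (plus at most one intermediate bond edge), so the closures never compete for a common edge. Two smaller points: your definition of $g_{\Delta}$ is ambiguous when $\mathcal{C}$ contains both copies of the same edge of $H$ (you must stipulate that they cancel to $0$), and the Eulerian-style tour decomposition you invoke must be sharpened to edge disjoint \emph{cycles} with multiplicity one, since that --- not mere decomposability into tours --- is what the definition of a simple genome modification requires.
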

\begin{proof}
A transition from $+1$ labeled edges to $-1$ labeled edges (or visa versa) in $\mathcal{C}$ only occurs for two adjacent bond edges.
Let $W$ be the set of vertices with 2 incident bond edges from $\mathcal{C}$.
The size of $W$ will be even since the number of transitions must be even.
Furthermore, it is always possible to construct a set of edge disjoint paths $\mathcal{P}$ connecting pairs of vertices in $W$.
Let $v$ be a vertex in $W$, and $s=(v,u)$ the incident segment edge.
If $u \in W$, then $s$ is a one edge path connecting $v,u \in W$.
If $u \not \in W$, then there must exist another vertex $v' \in W$ with incident segment edge $s'=(v',u')$, and $u' \not \in W$.
Then $(u,u')$ is a bond edge not in $\mathcal{C}$ and $(v,u), (u,u'), (u', v')$ is a three edge alternating path connecting $v,u \in W$.
Now edges in $\mathcal{C}$ labeled $+1$ can be seen as edges unique to a genome instance $g_+$, edges in $\mathcal{C}$ labeled $-1$ can be seen as edges unique to a genome instance $g_-$, and edges in $\mathcal{C}$ are edges common to both $g_+$ and $g_-$.
\end{proof}

\begin{proposition}
Any simple genome modification can be represented as an edge disjoint alternating cycle in $F$.
\end{proposition}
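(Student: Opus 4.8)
The plan is to observe that this proposition is the refinement of Proposition~\ref{prop:simplemodrep} to the specific graph $F$, and that the coloring convention used to prove Proposition~\ref{prop:simplemodrep} was chosen precisely to coincide with the coloring of $F$. First I would take an arbitrary simple genome modification $g_\Delta = g_+ + g_-$. By Proposition~\ref{prop:simplemodrep}, after deleting the zero edges and coloring each remaining edge by the stated rule, the nonzero edges of $g_\Delta$ decompose into a set of vertex-disjoint (hence edge-disjoint) alternating cycles. It then remains only to realize these cycles inside $F$.

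To do this, I would define a labeling map on the nonzero edges of $g_\Delta$: send each edge $e$ with $g_\Delta(e) = +1$ to its $+1$-labeled copy in $F$, and each edge $e$ with $g_\Delta(e) = -1$ to its $-1$-labeled copy in $F$, keeping every vertex fixed. Recall that $F$ contains a single copy of each vertex of $H$ and two labeled copies of each edge of $H$, so the required copy is always available. Since $g_\Delta$ assigns a single value in $\{-1, 0, +1\}$ to each edge, each edge of $H$ contributes at most one copy under this map, and distinct edges of $H$ map to distinct edges of $F$; hence the map is injective and its image uses each edge of $F$ at most once, preserving edge-disjointness.

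The key verification is that this map is \emph{color-preserving}. I would compare the two coloring rules case by case: a segment edge with $g_\Delta(e)=+1$ is red in Proposition~\ref{prop:simplemodrep} and its $+1$ copy is red in $F$; a segment edge with $g_\Delta(e)=-1$ is blue in both; a bond edge with $g_\Delta(e)=+1$ is blue in both; and a bond edge with $g_\Delta(e)=-1$ is red in both. Because the two rules agree in all four cases, a red edge maps to a red edge and a blue edge maps to a blue edge, so the red/blue alternation established in Proposition~\ref{prop:simplemodrep} is carried over unchanged into $F$. Consequently the vertex-disjoint alternating cycles of Proposition~\ref{prop:simplemodrep} become an edge-disjoint set of alternating cycles in $F$, which is the required representation.

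I do not expect a serious obstacle, since Proposition~\ref{prop:simplemodrep} already supplies the cycle decomposition and the alternation argument; the only content here is checking that the image genuinely lands in $F$ and respects its coloring. The one point requiring care is the bookkeeping on the sign-to-label correspondence: I must confirm that the direction of the labeling (which sign selects the $+1$ versus the $-1$ copy) is consistent with the color assignments, so that alternation is preserved rather than inadvertently swapped. Verifying the four coloring cases above settles exactly this point.
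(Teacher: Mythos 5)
Your proposal is correct and takes exactly the paper's route: the paper's proof is the one-line statement that the result ``follows directly from Proposition~\ref{prop:simplemodrep}'', and your argument spells out precisely what that directness consists of, namely that the four coloring cases in the proof of Proposition~\ref{prop:simplemodrep} were chosen to coincide with the edge coloring used in the construction of $F$, so the sign-to-label map transfers the vertex-disjoint alternating cycles into $F$ unchanged. Your explicit check of the color-preserving correspondence is a faithful (and more careful) elaboration of the paper's intended argument, not a different approach.
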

\begin{proof}
Follows directly from Proposition~\ref{prop:simplemodrep}
\end{proof}

Note that simple genome modifications involving only bond edges are equivalent to a balanced rearrangement in a breakpoint graph.

\subsubsection{Selecting an optimal simple genome modification}

For a collection of $M$ genomes, consider the set of edge copy number modifications $\mathbb{T}^M$, where $\mathbb{T}=\{-1, 0, +1\}$.
Let $\Delta \in \mathbb{T}^M, \Delta \neq 0$ be a specific modification affecting at least one of the $M$ genomes.
Let $z \in \mathbb{T}^E$ represent the acceptance of the modification or its inverse for each edge in $E$.
Thus $z_e \Delta$ represents modification by $\Delta$, $-\Delta$ or $0$ for $z_e$ equal to $+1$, $-1$ and 0 respectively.
Write the objective of a locally optimal modification given $\Delta$ as shown in Equation~\ref{eqn:modifyobj}.
\begin{eqnarray}
&&\operatornamewithlimits{argmin}_{z \in \mathbb{T}^E} \sum_{e \in E} f_e (c_e + z_e \Delta)
\label{eqn:modifyobj} \\ 
&\text{s.t.}& \sum_{e \in S(v)} z_e \Delta = \sum_{e \in Q(v)} z_e \Delta\ \ \ \forall v \in V, 
\label{eqn:modifyobjconstraint1} \\
&& 1 \geq \sum_{e \in S(v)} |z_e| \ \ \ \forall v \in V,
\label{eqn:modifyobjconstraint2} \\
&& 1 \geq \sum_{e \in Q(v)} |z_e| \ \ \ \forall v \in V
\label{eqn:modifyobjconstraint3}
\end{eqnarray}

The first constraint (Equation~\ref{eqn:modifyobjconstraint1}) is simply the copy number balance condition.
The second and third constraints (Equations~\ref{eqn:modifyobjconstraint2} and \ref{eqn:modifyobjconstraint3}) ensure that the set of alternating cycles represented by $z$ is vertex disjoint, and is thus a simple genome modification.

To identify the locally optimal modification, first create genome modification graph $F$, then create cost function $\kappa$ assigning cost to each edge as follows:
\begin{itemize}
\item edge $e$ labeled $+1$ $\to$ $f_e(c_e + \Delta) - f_e(c_e)$
\item edge $e$ labeled $-1$ $\to$ $f_e(c_e - \Delta) - f_e(c_e)$.
\end{itemize}

\begin{proposition}
Assume $f_e$ is convex. 
A minimum cost vertex disjoint set of alternating cycles $C$ in $F$ given cost function $\kappa$ is a locally optimal modification minimizing Equation~\ref{eqn:modifyobj}.
\end{proposition}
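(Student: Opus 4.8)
The plan is to set up a cost-preserving correspondence between feasible points $z$ of the constrained problem in Equation~\ref{eqn:modifyobj} and vertex disjoint sets of alternating cycles in $F$, and then to use convexity of $f_e$ to discard the single family of cycle sets that fails to correspond to any feasible $z$. First I would invoke Proposition~\ref{prop:simplemodrep} together with the two subsequent propositions, which already establish that the feasible region of Equation~\ref{eqn:modifyobj} — vectors $z \in \mathbb{T}^E$ satisfying the balance condition (Equation~\ref{eqn:modifyobjconstraint1}) and the vertex disjointness conditions (Equations~\ref{eqn:modifyobjconstraint2} and \ref{eqn:modifyobjconstraint3}) — is exactly the set of simple genome modifications, each realized by a vertex disjoint set of alternating cycles in $F$. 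The identification is the natural one: selecting the $+1$-labeled copy of $e$ corresponds to $z_e = +1$, selecting the $-1$-labeled copy corresponds to $z_e = -1$, and selecting neither corresponds to $z_e = 0$.

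Under this identification the only genuinely new bookkeeping is the cost computation. By the definition of $\kappa$, the total cost of any cycle set that uses at most one copy of every edge is
\begin{eqnarray}
& & \sum_{e \,:\, z_e = +1} \big( f_e(c_e + \Delta) - f_e(c_e) \big) + \sum_{e \,:\, z_e = -1} \big( f_e(c_e - \Delta) - f_e(c_e) \big) \nonumber \\
& = & \sum_{e \in E} f_e(c_e + z_e \Delta) - \sum_{e \in E} f_e(c_e). \nonumber
\end{eqnarray}
Because the subtracted sum is a constant independent of $z$, minimizing the $\kappa$-cost over such cycle sets is identical to minimizing the objective of Equation~\ref{eqn:modifyobj}.

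The gap in the correspondence, and the step where I expect the actual work to lie, is that a vertex disjoint set of alternating cycles in $F$ is a priori allowed to select \emph{both} the $+1$ and the $-1$ copy of one edge $e$. These two copies are parallel edges of opposite color on the same pair of vertices, so they form an isolated alternating $2$-cycle; such a configuration would assign $z_e$ the inconsistent values $+1$ and $-1$ simultaneously and so corresponds to no feasible $z$. Here convexity of $f_e$ is exactly what is needed: the $\kappa$-cost of this $2$-cycle is $f_e(c_e + \Delta) + f_e(c_e - \Delta) - 2 f_e(c_e)$, which is nonnegative since $c_e$ is the midpoint of $c_e + \Delta$ and $c_e - \Delta$. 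Deleting the $2$-cycle therefore does not increase total cost, and as it is an entire connected component its removal leaves a valid vertex disjoint set of alternating cycles. Iterating this deletion shows that some minimum cost cycle set uses at most one copy of each edge and hence determines a well defined feasible $z^\star$.

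Finally I would assemble the pieces. A minimum cost set $C$ may, without loss of generality, be taken to use at most one copy per edge, so by the correspondence it maps to a feasible $z^\star$ whose Equation~\ref{eqn:modifyobj} objective equals the cost of $C$ plus the constant $\sum_{e} f_e(c_e)$; conversely every feasible $z$ maps back to a cycle set of cost equal to its objective minus the same constant. Since $C$ has minimum cost, $z^\star$ attains the minimum of Equation~\ref{eqn:modifyobj} over the feasible set, which is precisely the claim that $C$ is a locally optimal modification. The convexity argument isolating and discarding the parallel $2$-cycles is the only nonroutine ingredient; the feasibility correspondence is inherited from the preceding propositions and the cost identity is immediate.
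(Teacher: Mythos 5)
Your proposal is correct and follows essentially the same route as the paper's proof: the same $z \leftrightarrow C$ correspondence, the same observation that minimizing the $\kappa$-cost is minimizing the objective shifted by the constant $\sum_e f_e(c_e)$, and the same convexity step showing that a cycle set selecting both the $+1$ and $-1$ copies of an edge incurs nonnegative paired cost $f_e(c_e+\Delta)+f_e(c_e-\Delta)-2f_e(c_e)$ and can be pruned. The only cosmetic difference is that the paper concludes this paired cost is exactly zero at a minimum (since otherwise neither copy would have been selected), while you only use nonnegativity to delete the parallel $2$-cycle without increasing cost — an equivalent formulation of the same idea.
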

\begin{proof}
Any $z$ can be transformed into a specific $C$ by selecting the edges in $F$ corresponding to settings of $z$ (label $+1$ if $z_e=+1$, label $-1$ if $z_e=-1$, or no edge if $z_e=0$).
Conversely, suppose $C$ includes both the $+1$ labeled edge and $-1$ labeled edge that correspond to a single edge in $H$.
The convexity of $f_e$ implies that $f_e(c_e + \Delta) - f_e(c_e) + f_e(c_e - \Delta) - f_e(c_e) \geq 0$.
Selection of both edges implies $f_e(c_e + \Delta) - f_e(c_e) + f_e(c_e - \Delta) - f_e(c_e) \leq 0$, otherwise neither would have been selected.
Thus $f_e(c_e + \Delta) - f_e(c_e) + f_e(c_e - \Delta) - f_e(c_e) = 0$, and both edges can be removed from $C$ without consequence to the objective.
Let $C'$ be derived from $C$ by removing pairs of $+1$ labeled edge and $-1$ labeled edges edge corresponding to a single edge in $H$.
There is a bijection between the space of possible $z$ and the space of possible $C'$.
Furthermore, minimizing Equation~\ref{eqn:modifyobj} is equivalent to minimizing $\sum_{e \in E} \left( f_e (c_e + z_e \Delta) - f_e (c_e) \right)$.
The result follows.
\end{proof}

To identify the minimum cost vertex disjoint set of alternating cycles in the bi-edge-colored graph $F=(V, E=(E', E'')))$, we use the following well known transformation.
Create two separate graphs on distinct vertices, $F'=(V', E')$ and $F''=(V'', E'')$, by selecting a subset of edges from $F$.
Merge $F'$ and $F''$ to create $F'''$ by adding additional edges $E'''$ connecting vertices in $F'$ with vertices in $F''$ that originated from the same vertex in $F$.
Let $M$ be a prefect matching in $F'''$.
Such a matching exists since $E'''$ is a perfect matching.
Furthermore, $M$ is a perfect matching in $F'''$ if and only if it is composed of a set of edge disjoint alternating cycles in $F$ and a subset of edges from $E'''$ matching the remaining unmatched vertices in $F'''$.
The set of edge disjoint alternating cycles is the minimum cost such set in $F$.
We use Blossom V \cite{kolmogorov2009blossom} to calculate the minimum cost perfect matching.

\subsubsection{Greedy Algorithm}

We propose the following algorithm for solving the maximum posterior genome mixture problem.
Given are haploid read depths $h$.
Initialize $\mathcal{G}$ to a valid genome collection as follows.
Set the copy number of segment edge $n$ to $c_n$.
Set the copy number of the bond edge parallel to segment edge $n$ also to $c_n$.
The algorithm proceeds as follows.
Given current iteration $t$, identify the minimum cost modification $\Delta$ of $\mathcal{G}^{(t)}$ from the set of all possible modifications $\Delta \in \mathbb{T}^M, \Delta \neq 0$.
If the minimum cost modification has cost less than 0, apply it to $\mathcal{G}^{(t)}$ to create $\mathcal{G}^{(t+1)}$ and continue.
If all modifications have cost 0, stop iteration.

\section{Results}

\subsection{Simulating rearranged genomes}

We developed a principled method of simulating rearranged genomes that fulfilled two important criteria.
First, the simulated tumour genomes are required to have been produced by a known evolutionary history composed of duplication, deletion, and balanced rearrangement events applied successively to a initially non-rearranged normal genome.
Second, the copy number profile of the simulated tumour genome should be reasonably similar to that of previously observed tumours.

Naively applying random rearrangements to a normal genome would result in a significant number of regions that are homozygously deleted.
Such a scenario is unrealistic given that large scale homozygous deletion would remove housekeeping genes necessary for the survival of the cell.
Alternatively, a naively simulated genome could become exceptionally large, an outcome that is rarely observed given the presumed burden of replicating such a genome.
Thus, we developed a re-sampling method for producing realistic rearranged genomes.
At each step in the simulation of an evolutionary history, we re-sample a swarm of genomes according to a \emph{fitness} function.
Fitness is calculated as the multinomial likelihood of the simulated copy numbers given average copy number proportions from a set of real tumours.
For this study we used copy number proportions measured from 7 high grade serous tumours (data not shown).

To simulate a mixture of related genomes, we first simulate the rearrangement history of the common ancestor.
We then modify the fitness function to include term that controls the deviation between ancestor and descendant.
A target deviation is specified as the proportion of the genome with divergent copy number state between ancestor and descendant.
The deviation term is the squared error between simulated deviation and target deviation, with some user specified scaling factor, or variance. 

We simulated 20 mixtures of rearranged tumour genomes.
Genomes in each mixture harbored 50 ancestral and 40 clone specific rearrangements, with an additional 50 false rearrangements.
Each genome consisted of 1000 segments with randomly sampled lengths totaling $3\times10^9$ nt.
Segment lengths as a proportion of genome length sampled from a Dirichlet distribution with concentration parameter 1.
Proportion genotypable reads uniformly from between 0.05 and 0.2.
We assumed the samples were composed of 40\% normal cells and 2 tumour clones.
Target deviation was set at 30\%.
Minor clone proportions were set to 5, 10, 20, and 30\% of cells.
Read counts were simulated using a negative binomial likelihood given segment copy numbers and assuming 40X sequencing \footnote{total haploid coverage of 0.2 reads (paired) per nucleotide corresponds to a 40X sequence coverage genome, 100X100bp reads from $\sim$400bp fragments}.

\subsection{Benchmarking learning haploid depth using simulated data}

We used EM to infer the haploid read depths ($h$ parameter) within the context of the HMM version of our model for simulated datasets.
For 50\% of the simulated datasets, minor clone proportion predictions are within 5\% of the simulated value, and normal proportion predictions are within 2\% of the simulated value (Figure~\ref{fig:learnresults}).

\begin{figure}[h]
\caption[Simulations]{
	Box plots of inferred clonal fraction for the minor tumour clone (left) and normal clone (right).  Datasets are grouped by simulated minor clone proportion (0.05, 0.1, 0.2, 0.4).
	Normal proportion is 0.4 for all simulations.
}
\label{fig:learnresults}
\begin{subfigure}[b]{\textwidth}
	\centering
	\caption{Proportion minor clone in the mixture with normal fixed at 0.4}
	\includegraphics[width=0.5\textwidth]{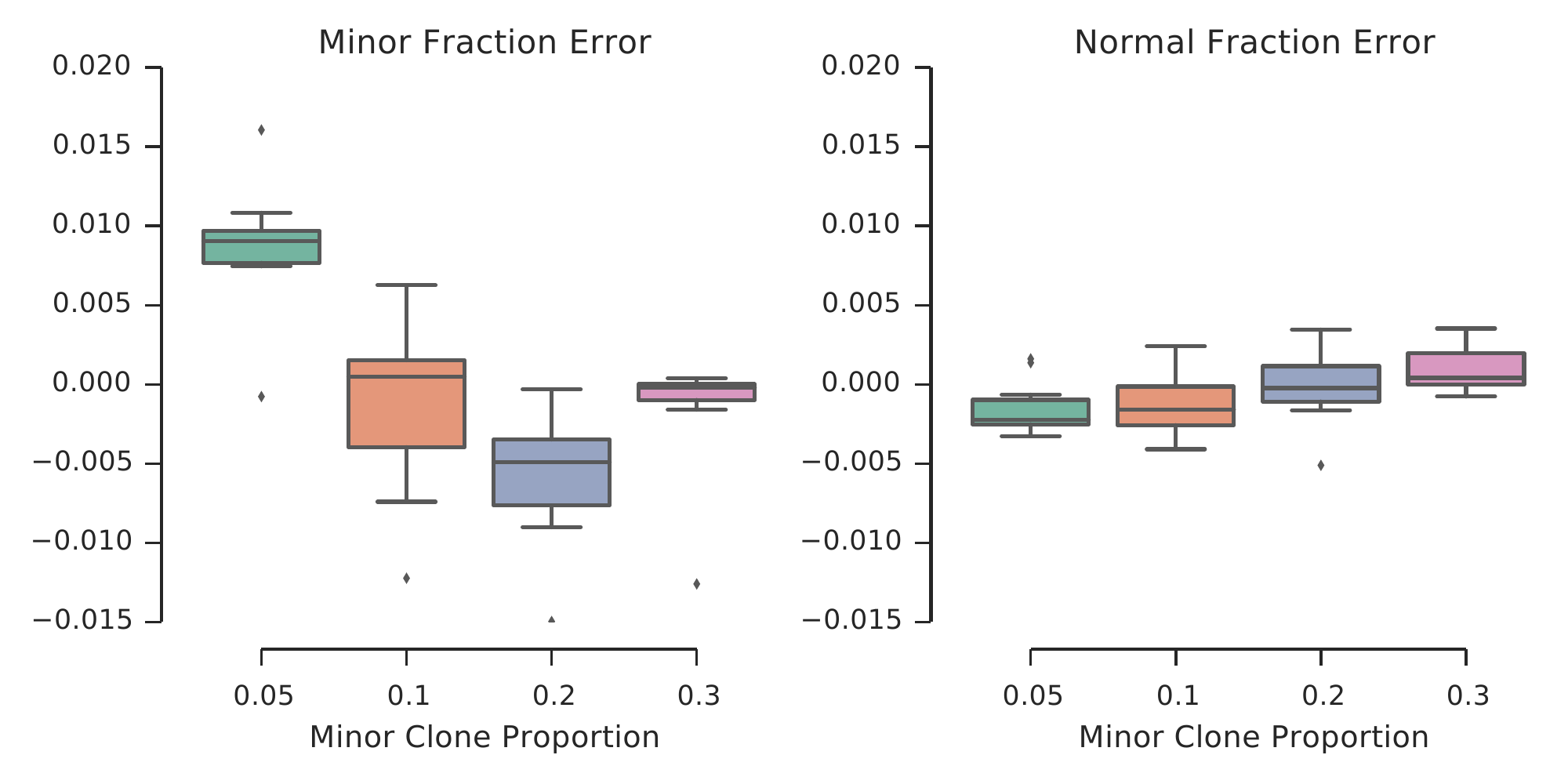}
	\label{fig:learnminorclone}
\end{subfigure}
\begin{subfigure}[b]{\textwidth}
	\centering
	\caption{Proportion of normal in the mixture}
	\includegraphics[width=0.5\textwidth]{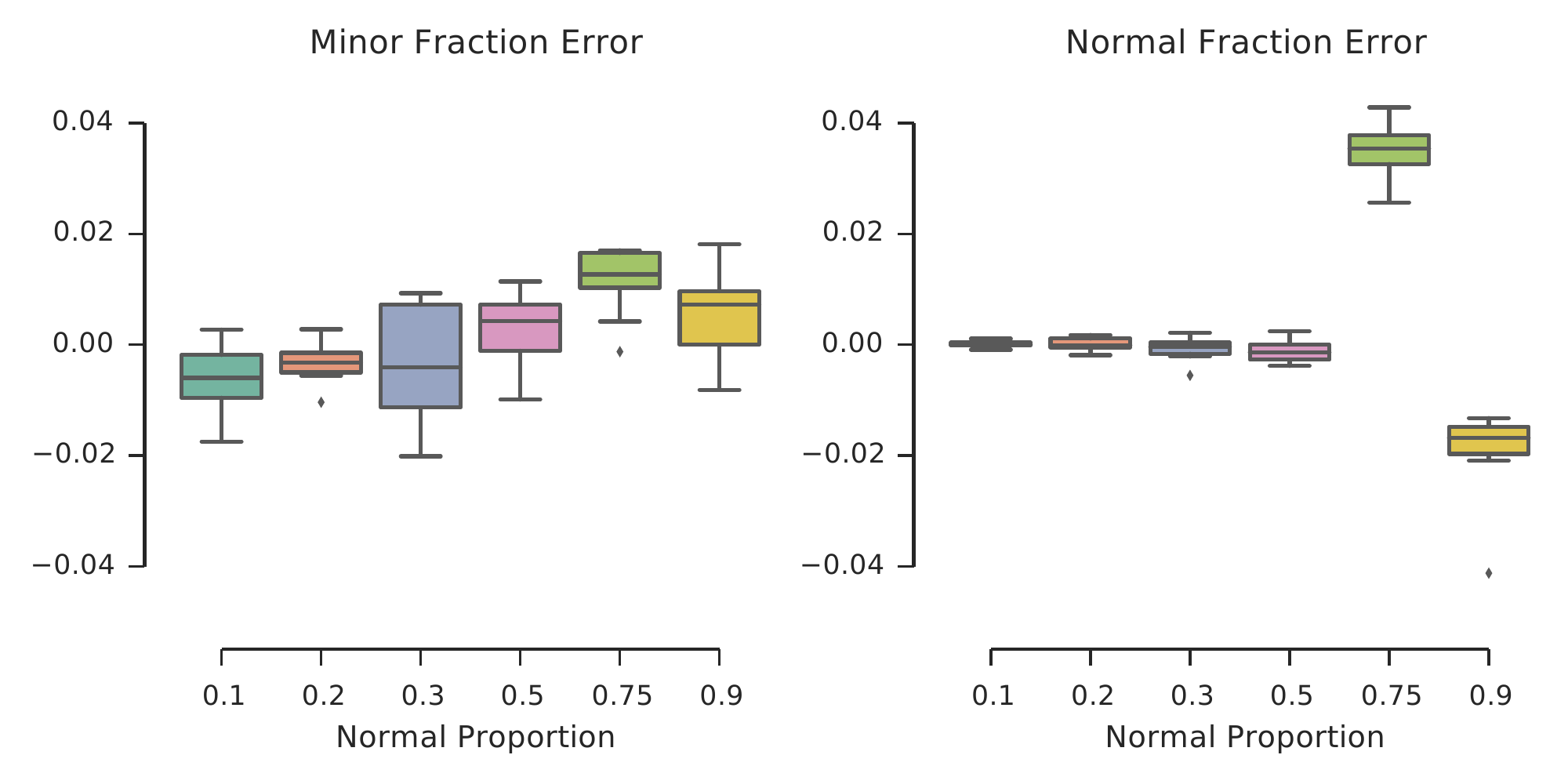}
	\label{fig:learnnormal}
\end{subfigure}
\begin{subfigure}[b]{\textwidth}
	\centering
	\caption{Proportion of genomic segments with divergent (subclonal) copy number}
	\includegraphics[width=0.5\textwidth]{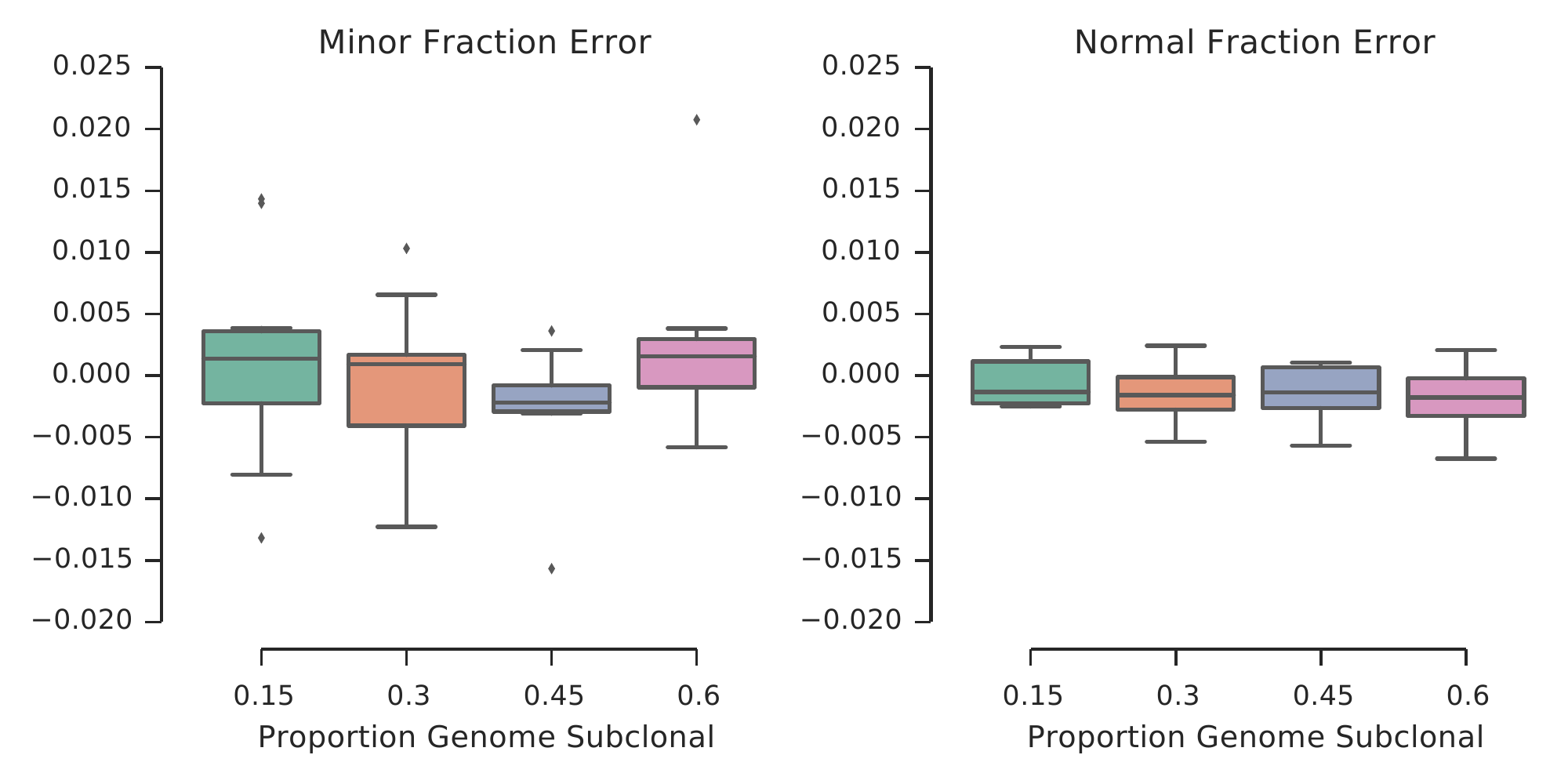}
	\label{fig:learnpropsubclonal}
\end{subfigure}
\end{figure}

\subsection{Benchmarking structure and content prediction using simulated data}

We applied 3 decoding algorithms to the read count data assuming the clone proportions and sequencing depth were known.
The \emph{independent} algorithm calculates the maximum likelihood copy number state of each segment and then post-hoc assigns copy number to breakpoints.
The \emph{Viterbi} algorithm calculates the maximum posterior path through the HMM representation of each chromosome, also assigning breakpoint copy number post-hoc.
The \emph{genomegraph} algorithm uses the proposed algorithm to simultaneously infer segment and breakpoint copy number.
For optimal performance, the genomegraph algorithm is initialized with the results of the Viterbi.

\begin{figure}[h]
\caption{Performance of the genomegraph algorithm compared to two breakpoint naive approaches, the independent model and and HMM using the viterbi algorithm.  For each set of plots, one parameter of the simulation was varied.}
\label{fig:inferenceresults}
\begin{subfigure}[b]{\textwidth}
	\centering
	\caption{Proportion minor clone in the mixture with normal fixed at 0.4}
	\includegraphics[width=0.75\textwidth]{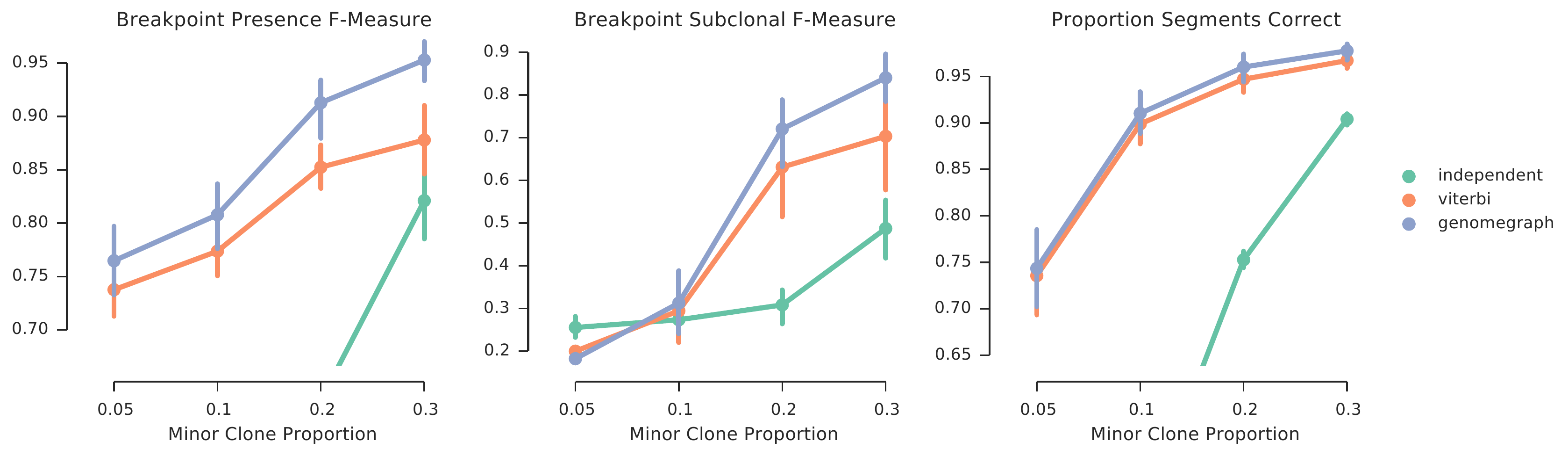}
	\label{fig:inferminorclone}
\end{subfigure}
\begin{subfigure}[b]{\textwidth}
	\centering
	\caption{Proportion of genomic segments with divergent (subclonal) copy number}
	\includegraphics[width=0.75\textwidth]{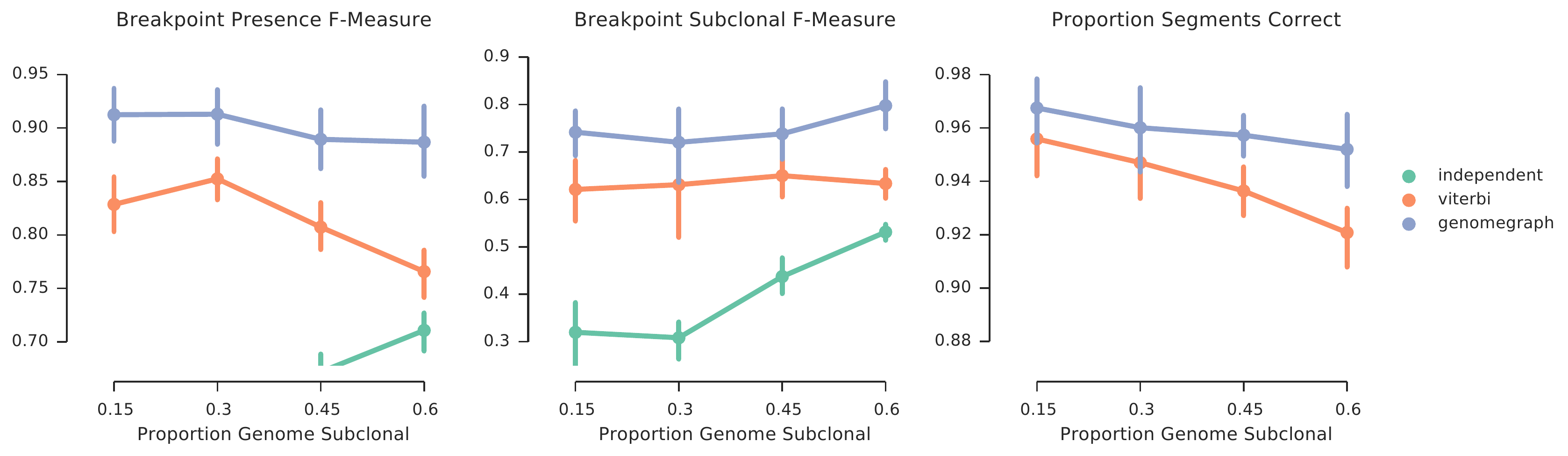}
	\label{fig:inferpropsubclonal}
\end{subfigure}
\begin{subfigure}[b]{\textwidth}
	\centering
	\caption{Proportion of normal in the mixture}
	\includegraphics[width=0.75\textwidth]{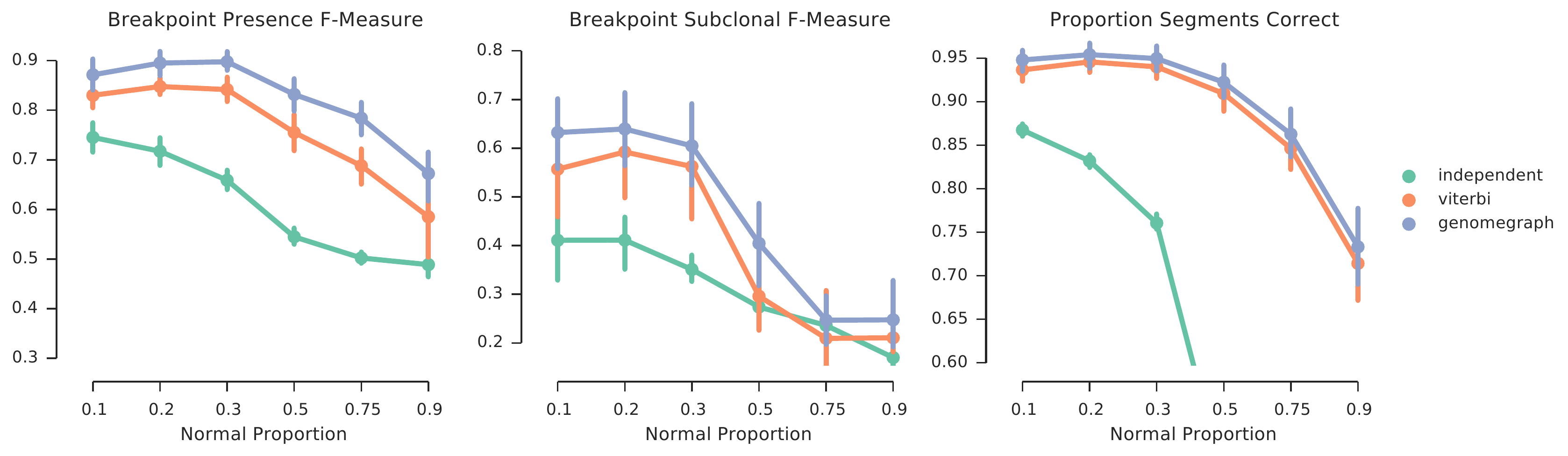}
	\label{fig:infernormal}
\end{subfigure}
\begin{subfigure}[b]{\textwidth}
	\centering
	\caption{Learning error as standard deviation from true fraction}
	\includegraphics[width=0.75\textwidth]{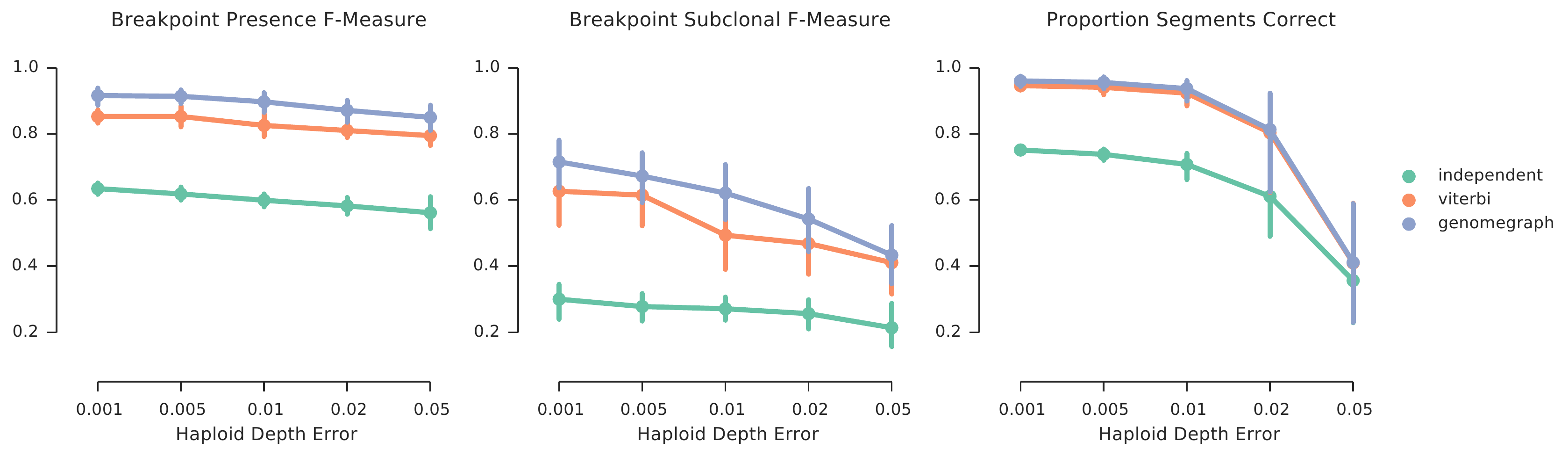}
	\label{fig:inferfracerror}
\end{subfigure}
\end{figure}

We calculated 3 measures of performance, f-measure of ability to predict breakpoints as present versus false, f-measure of ability to predict breakpoints as subclonal, and proportion of segments for which the correct copy number is identified.
The genomegraph algorithm outperforms the independent and Viterbi algorithms by all measures for all but the 5\% minor clone mixtures (Figure~\ref{fig:inferenceresults}).
The inability of the independent algorithm to model spatial correlation between adjacent segments results in a higher number of spurious copy number transitions and low precision with respect to estimation of breakpoint presence and clonality.
Precision is higher for the Viterbi due to the smoothing properties of the algorithm.
However, recall is lower than the genomegraph method, since copy number changepoints either do not precisely coincide with respective breakpoints, or are smoothed over entirely for copy number changes in low proportion clones.
Finally, joint inference noticeably improves the accuracy of segment copy number prediction over the current state of the art, viterbi inference in an HMM.

\subsection{Comparison with Existing Copy Number Inference Methods}

We compared our genome graph approach to four existing methods for subclonal copy number inference in heterogeneous tumour samples: TITAN \cite{Ha:2014fr}, CloneHD \cite{Fischer:2014zl}, and Theta2.0 \cite{Oesper:2012vn,Oesper:2014fj} (see supplementary section~\ref{sec:runtools}).
We first simulated 10 normal genomes, including SNP genotypes.
Recombination sites were selected at a rate of 20 per 100Mb.
For each region between adjacent recombination sites we selected a random individual from a phased 1000 genomes reference panel \cite{Delaneau:2012qd}, and used that individual's SNP genotypes for that region.
For each of the 10 normal genomes, we then simulated 4 pairs of tumour clones using the re-sampling method described above, varying the proportion of the genome that is divergent.
Finally we simulated 4 genome mixtures for each of the 40 pairs of tumour clones.
For each simulated mixture, we calculated the number of reads for each clone in an approximately 30X sequencing dataset, and simulated fragments with mean 300 and standard deviation 30.
We then added added the germline SNPs to 100bp reads from these fragments, flipping the genotype of SNPs based on a simulated sequencing error rate of 0.005.

We compared each tool's output CNA predictions and mixture prediction to the true simulated values using 3 measures
\begin{enumerate}
	\item Proportion of segments with correct major/minor copy number for each clone, or total copy number where allele specific copy number was unavailable
	\item Relative error in estimation of normal proportion.
	\item Relative error in estimation of minor clone proportion.
\end{enumerate}
The results are shown in Figure~\ref{fig:comparisonresults}.
For our simulated data, our method outperforms the competing methods on all measures.
In general, failure of the competing methods to correctly predict copy number is primarily due to the inability to identify the true mixture.
Given an incorrect estimate of the normal contamination and minor clone proportion, copy number estimates are unlikely to be correct.

\begin{figure}[h]
	\caption{Performance of the genomegraph algorithm compared to three existing methods, TITAN, CloneHD, and Theta2.0.  Each plot shows performance with one parameter of the simulation varied.}
	\label{fig:comparisonresults}
	\begin{subfigure}[b]{\textwidth}
		\centering
		\caption{Proportion minor clone in the mixture with normal fixed at 0.4}
		\includegraphics[width=0.75\textwidth]{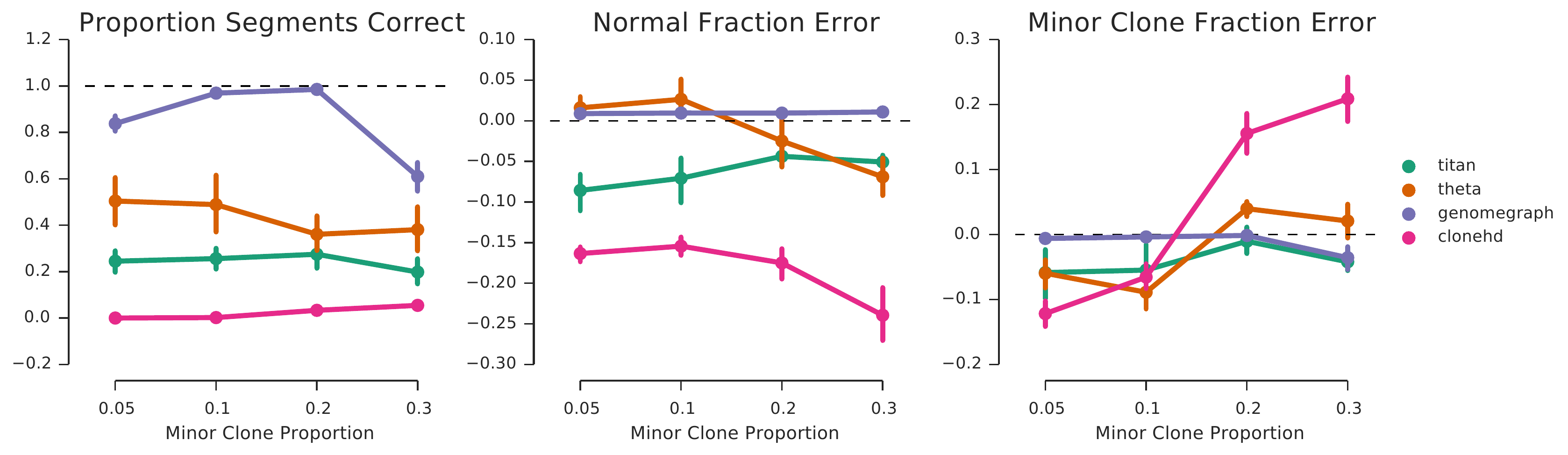}
		\label{fig:inferminorclone}
	\end{subfigure}
	\begin{subfigure}[b]{\textwidth}
		\centering
		\caption{Proportion of genomic segments with divergent (subclonal) copy number}
		\includegraphics[width=0.75\textwidth]{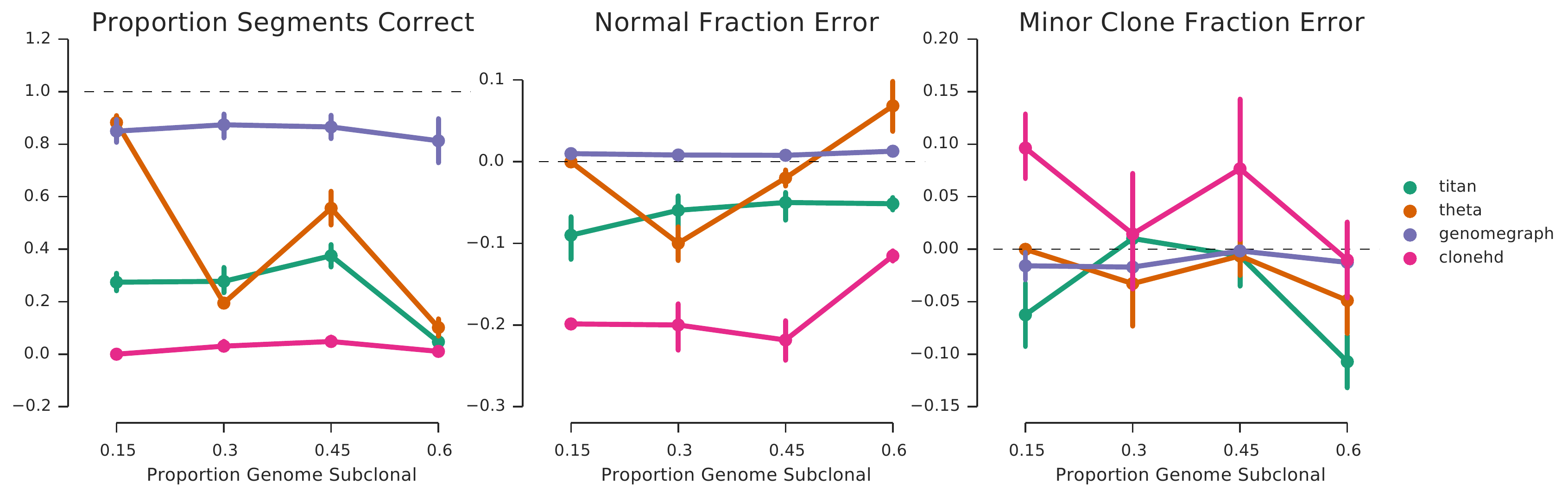}
		\label{fig:inferpropsubclonal}
	\end{subfigure}
\end{figure}

We considered the possibility that other tools may perform poorly on datasets with significant amounts of high level amplification.
Thus we simulated an additional 40 genome mixtures, modifying the target copy number state distribution to produce genomes for which segments with more than 4 total copies are very rare.
For this simulation we fixed the proportion of the genome that is divergent at 0.25, and varied the minor clone proportion in the mixture.
As can be seen in Figure~\ref{fig:comparisonresults2}, each tool performs better in some circumstances, but only the genomegraph method performs consistently well across different mixtures.

\begin{figure}[h]
	\caption{Performance comparison of the genomegraph algorithm with TITAN, CloneHD, and Theta2.0, using a dataset with limited amplified regions.  Showed is performance when varying the proportion of minor clone in the mixture with normal fixed at 0.4}
	\centering
	\includegraphics[width=0.75\textwidth]{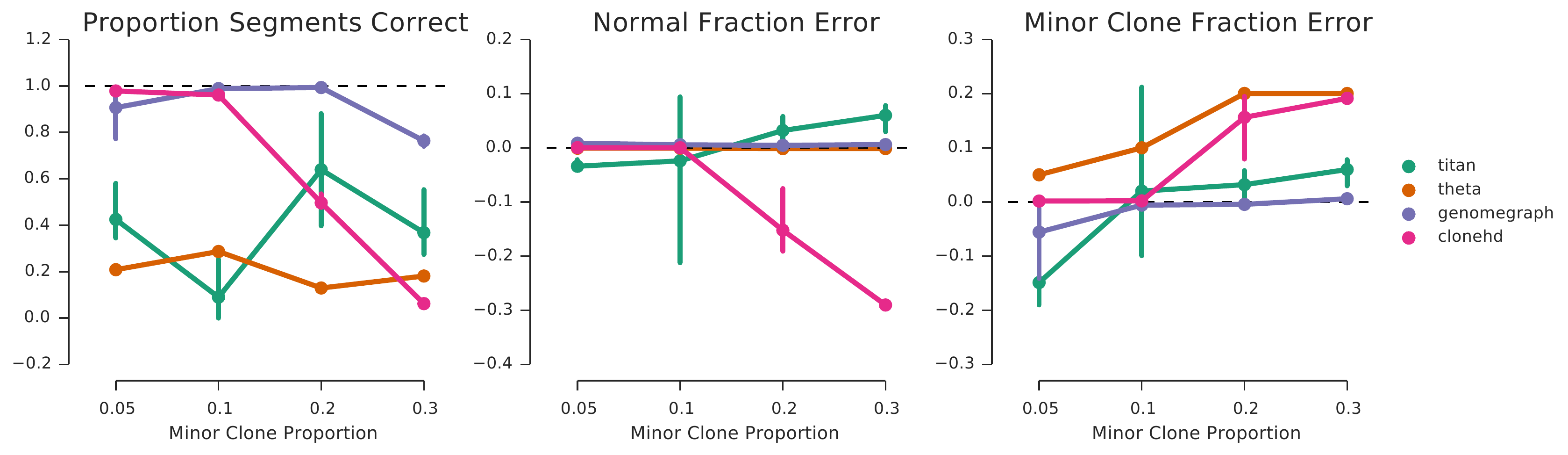}
	\label{fig:comparisonresults2}
\end{figure}

\section{Discussion}

We have developed a novel method for joint inference of genome content and structure.
Using a comprehensive set of simulated genome mixtures, we have shown that joint inference out-performs naive methods with respect to identification of subclonal breakpoints and classification of breakpoints as real or false positive.
Furthermore, we have shown that inclusion of breakpoints during copy number inference provides a modest but consistent improvement in the accuracy of predicted segment copy number.
Results from a comparison against existing subclonal copy number inference tools shows our method out-performs existing methods on the simulated mixtures.

Our method provides several additional novel contributions.
When selecting a segment length, a balance must be struck between the need for increased statistical strength provided by longer segments, and the additional noise that results from a true copy number transition occurring in the middle of a segment.
By segmenting at the break-ends of rearrangement breakpoints, we attempt to capture a majority of the copy number transitions with our segmentation.
We then use a medium size segmentation to break up longer unsegmented regions, allowing us to retain improved statistical strength, while modeling for a majority of the changepoints.
Furthermore, haplotype blocks are used in a novel way to improve the accuracy of allele specific read counts.
The aggregation of allele specific read counts across segments allows us to model the alleles of large segments as (approximately) independent, making inference more tractable.
Compared to copy number inference tools such as Titan, our state space is in some ways more comprehensive, allowing for any combination of clone copy number at each segment.

Though our initial results are promising, significant work remains.
The proposed moves of our greedy heuristic are not sufficient to escape local optima of significant importance.
For instance, a breakage fusion bridge cycle would be represented as a loop bond edge in the genome graph.
We do not include loop bond edges as they would never be selected by a matching based approach.
To support loops, we could either add dummy segment edges, making each loop a 3 edge alternating path, or use a more general matching algorithm such as minimum perfect b-matchings to identify optimal moves.
Finally, future work may show benefit to inclusion of rearrangement breakpoints during learning, providing motivation for development of a method that jointly infers genome content, genome structure, and clone structure.

\section{Supplementary Methods}

\subsection{Expectation Maximization Method for Learning $h$}
\label{sec:learnh}

To infer the haploid read depth parameter $h$, we break the dependency between segments connected by breakpoints, and rely on a restricted model that includes only the dependencies between segments adjacent in the reference genome.
We use a similar likelihood function, and then select a tractable graphical model (HMM) structure that matches the original problem as closely as possible.
Hidden states in the HMM correspond to copy number matrices: for segment $n$, copy number state $c$ is an $M$ by $2$ matrix, with entry $c_{m \ell}$ for the copy number of allele $\ell$ for clone $m$.

The HMM version of our problem requires a finite state space.
We restrict the state space by imposing a maximum copy number, and a maximum copy number difference between clones.
Segments with true copy number outside this finite state space are modeled with an additional null state $\emptyset$.
Transitions into the null state are penalized.
Segments in the null state are free to take on any copy number state $b_n$ in the infinite state space of all copy number matrices.
Thus $x_n$ is dependent on both the copy number prior $C$ and an independent copy number state $b_n$.
A uniform (improper) prior is placed over $b_n$.
In practice, only $b_n$ in the neighborhood of $\operatornamewithlimits{argmax}_b p(x_n|h, c_n=b)$ are considered.
The graphical model for the HMM is depicted in Figure~\ref{fig:hmmgraphicalmodel}.
\begin{figure}[h]
	\centering
	\includegraphics{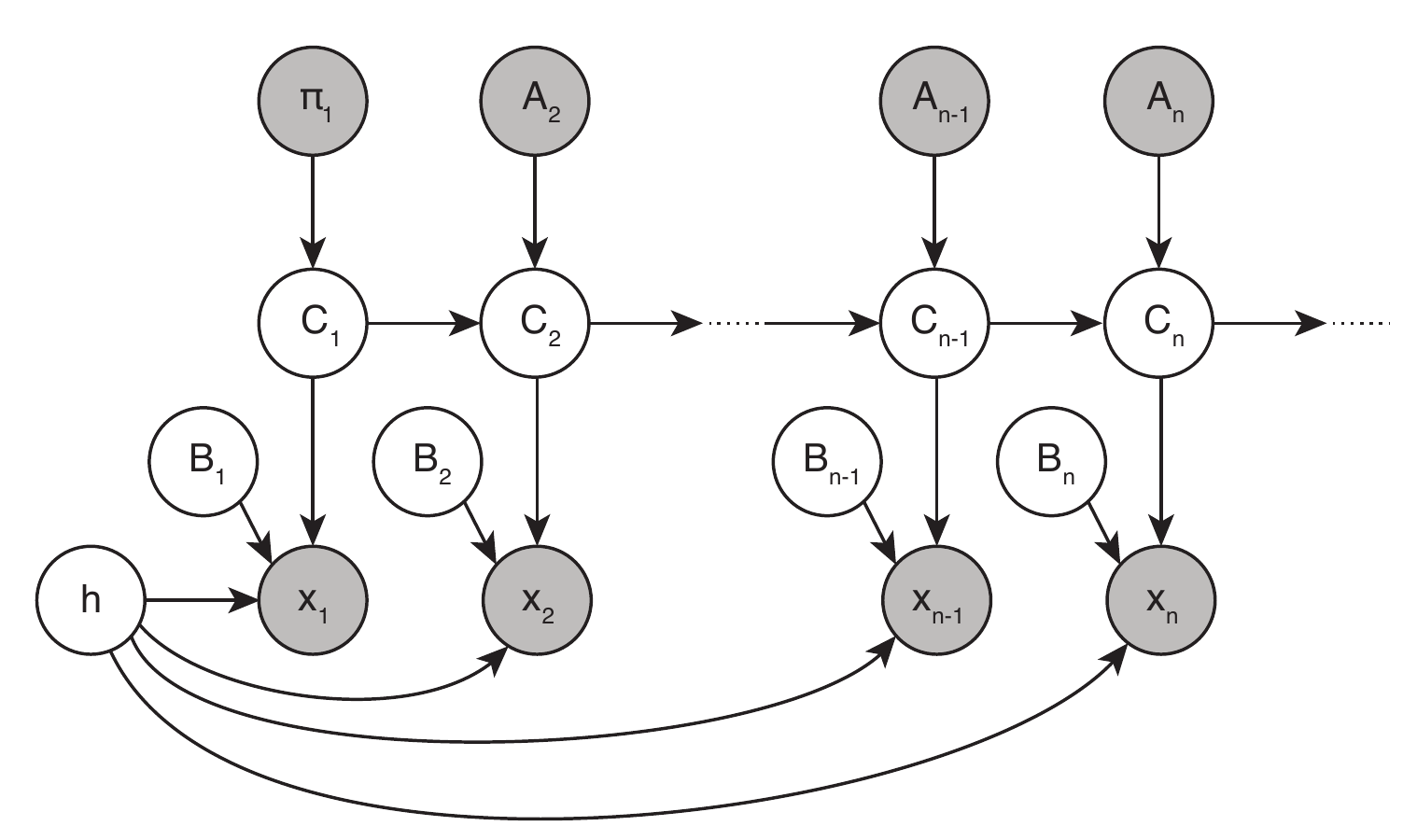}
	\caption[graphical model]
	{Graphical model of the HMM version of the problem used for parameter inference.}
	\label{fig:hmmgraphicalmodel}
\end{figure}

The full data log likelihood can be expressed as given by Equation~\ref{eqn:fulldataloglikelihood}, where $q(b,c)$ is given by Equation~\ref{eqn:copynumberchoice}.
\begin{eqnarray}
\log p(X, B, C | h, \cdot) &=& \log p(X | B, C, h, \cdot) + \log p(B | \cdot)  + \log p(C | \cdot) \nonumber \\
&=& \sum_n \sum_b \sum_c \mathbb{I}(b_n=b, c_n=c) \log p(x_n | h, c_n=q(b,c), \cdot) \nonumber \\
&& + \sum_n \sum_b \mathbb{I}(b_n=b) \log p(b_n=b) \nonumber \\
&& + \sum_c \mathbb{I}(c_1=c) \log \pi_1(c) \nonumber \\
&& + \sum_{n=2}^{N} \sum_c \sum_{c'} \mathbb{I}(c_{n-1}=c, c_n=c') \log A_n (c, c') 
\label{eqn:fulldataloglikelihood}
\end{eqnarray}
\begin{eqnarray}
q(b,c) &=& 
\begin{cases}
c &\mbox{if } c \neq \emptyset \\
b &\mbox{if } c = \emptyset
\end{cases} \label{eqn:copynumberchoice}
\end{eqnarray}

We model the probability of a segment having copy number that is higher than the maximum copy number in our HMM state space with an exponential distribution over the length of the segment, with exponential parameter $\lambda_a$.
Calculate $\alpha_n(c)$, the probability of high level amplification of segment $n$, as given by Equation~\ref{eqn:ampprior}.
\begin{eqnarray}
\alpha_n(c) &=& \lambda_a \exp \left( - \lambda_a l_n \mathbb{I}(c=\emptyset) \right) 
\label{eqn:ampprior}
\end{eqnarray}

Additionally, we model the probability of a segment having copy number that is divergent between tumour clones with an exponential distribution over the length of the segment, with exponential parameter $\lambda_d$.
Calculate the number of alleles with divergent copy number $d(c)$ as given by equation \ref{eqn:divergentcount}, and the probability of divergence $\rho_n(c)$ as given by equation \ref{eqn:divergentprior}.
\begin{eqnarray}
d(c) &=& \sum_{\ell} \prod_{m=2}^M \prod_{m'=2}^M \mathbb{I}(c_{m \ell} =c_{m' \ell}) 
\label{eqn:divergentcount} \\
\rho_n(c) &=& \lambda_d \exp \left( - \lambda_d l_n d(c) \right) 
\label{eqn:divergentprior}
\end{eqnarray}

Finally, we model the probability of copy number transitions between adjacent segments as an exponential distribution over the copy number difference between the segments.
The parameter of the transition exponential, $\beta$ is set to the same value as the telomere penalty in the combinatorial problem.
Calculate the number of telomere copies $t(c, c')$ at a transition from a segment with copy number $c'$ to a segment with copy number $c$ as given by equation \ref{eqn:telomerecopynumber}, and the probability of the transition as given by equation \ref{eqn:telomeretransitionprob}.
\begin{eqnarray}
t(c, c') &=& \sum_m \sum_{\ell} |c-c'|
\label{eqn:telomerecopynumber} \\
\eta(c, c') &=& \beta \exp\left( - \beta t(c, c') \right)
\label{eqn:telomeretransitionprob}
\end{eqnarray}

The prior probability over the copy number of the first segment $p(c_1=c)$ is thus given by equation \ref{eqn:hmmprior}, and the transition probability from segment $n-1$ to segment $n$ is given by equation \ref{eqn:hmmtransition}.
\begin{eqnarray}
\pi_1 (c) &=& p(c_1=c) \nonumber \\
&\propto& \rho_1(c) \alpha_1(c)
\label{eqn:hmmprior} \\
A_n (c, c') &=& p(c_n=c | c_{n-1}=c') \nonumber \\
&\propto& \rho_n(c) \alpha_n(c) \eta(c, c')
\label{eqn:hmmtransition}
\end{eqnarray}

The expected value of the log likelihood function with respect to the conditional distribution $p(C|X,h^{(t-1)},\cdot)$, excluding terms independent of $h$, is given by Equation~\ref{eqn:expectationfunction}.
\begin{eqnarray}
Q(h, h^{(t-1)}) &=& \mathbb{E}_{p(B,C|X,h^{(t-1)},\cdot)} [\log p(X,B,C|h,\cdot)] \nonumber \\
&=& \sum_n \sum_b \sum_c p(b_n=b,c_n=c|x_n,h^{(t-1)},\cdot) \log p(x_n | h, c_n=q(b,c), \cdot) \nonumber \\
&=& \sum_n \sum_{c \neq \emptyset} p(c_n=c|x_n,h^{(t-1)},\cdot) \log p(x_n | h, c_n=c, \cdot) \nonumber \\
&& + \sum_n \sum_b p(b_n=b,c_n=\emptyset |x_n,h^{(t-1)},\cdot) \log p(x_n | h, c_n=b, \cdot)
\label{eqn:expectationfunction}
\end{eqnarray}

To calculate the joint-posterior marginal probabilities $p(b_n=b,c_n=c|x_n,h^{(t-1)},\cdot)$, and thereby calculate $Q(h, h^{(t-1)})$ (E-step), we use the forwards-backwards algorithm.
The graphical model is a polytree, thus the forwards-backwards cannot be applied directly.
Instead, we model both $c_n$ and $b_n$ jointly, and apply the forwards-backwards to the merged state space of $c_n$ and $b_n$.

Typically the state space of the jointly modeled variables would be a cross product of the state space of each variable independently.
For our model, such a state space would be unmanageably large.
Fortunately a cross product state space can be avoided due to the mutually exclusive effect of $c_n$ and $b_n$ on the likelihood of $x_n$, and given that $c_n$ and $b_n$ are independent of $b_{n-1}$.
For instance, when marginalizing over the space of $c_n$ and $b_n$ in the forwards pass, part of the marginalization can take place analytically (Equation~\ref{eqn:forwarditer}).
\begin{eqnarray}
p(b_n=b, c_n=c | x_{1:n-1}) &=& \sum_{b'} \sum_{c'} p(b_{n-1}=b', c_{n-1}=c' | x_{1:n-1}) \nonumber \\
&& \times \ p(b_n=b, c_n=c | b_{n-1}=b', c_{n-1}=c') \nonumber \\
&=& \sum_{c' \neq \emptyset} p(c_{n-1}=c' | x_{1:n-1}) p(b_n=b, c_n=c | c_{n-1}=c') \nonumber \\
&& + \sum_{b'} p(b_{n-1}=b', c_{n-1}=\emptyset | x_{1:n-1}) p(b_n=b, c_n=c | c_{n-1}=\emptyset) \nonumber \\
\label{eqn:forwarditer}
\end{eqnarray}

The backwards pass can be simplified similarly.
In effect, we can consider a state space that includes all regular ($\neq \emptyset$) states for $c_n$, concatenated with the additional states for $b_n$.
Entries of the transition matrix are given by Equation~\ref{eqn:modtransition}.
\begin{eqnarray}
p(b_n=b, c_n=c | c_{n-1}=c') &=& p(c_n=c | c_{n-1}=c') p(b_n=b)
\label{eqn:modtransition}
\end{eqnarray}

We maximize $Q(h, h^{(t-1)})$ with respect to $h$ (M-Step) numerically.
The gradient of $Q(h, h^{(t-1)})$, used for numerical maximization, can be computed as given by Equations~\ref{eqn:maximizeq1}-\ref{eqn:maximizeq4}.
\begin{eqnarray}
\frac{\partial Q(h, h^{(t-1)})}{\partial h_m} &=& \sum_n \sum_{c \neq \emptyset} p(c_n=c|x_n,h^{(t-1)},\cdot) \sum_{k=1}^3 \frac{\partial}{\partial h_m} \log p(x_{nk} | h, c_n=c, \cdot) \nonumber \\
&& + \sum_n \sum_b p(b_n=b,c_n=\emptyset|x_n,h^{(t-1)},\cdot) \sum_{k=1}^3 \frac{\partial}{\partial h_m} \log p(x_{nk} | h, c_n=b, \cdot) \nonumber \\
\label{eqn:maximizeq1} 
\end{eqnarray}
\begin{eqnarray}
\frac{\partial}{\partial h_m} \log p(x_{nk} | h, c_n, \cdot) &=& \left( \frac{\partial}{\partial \mu_{nk}} \log p(x_{nk} | h, c_n, \cdot) \right) \left( \frac{\partial \mu_{nk}}{\partial h_m} \right)
\label{eqn:maximizeq2} \\
\frac{\partial \mu_{nk}}{\partial h_m} &=& \frac{\partial \gamma_n}{\partial h_m} P_{n} l_n
\label{eqn:maximizeq3} \\
\frac{\partial \gamma_{nk}}{\partial h_m} &=& c_{nm}
\label{eqn:maximizeq4}
\end{eqnarray}
 
A negative binomial specific term is given by Equation~\ref{eqn:derivativenegbin}, and the poisson specific term by Equation~\ref{eqn:derivativepoisson}.
\begin{eqnarray}
\frac{\partial}{\partial \mu_{nk}} \log p(x_{nk} | h, c_n, \cdot) &=& \frac{x_{nk}}{\mu_{nk}} - \frac{r_k + x_{nk}}{r_k+\mu_{nk}}
\label{eqn:derivativenegbin} \\
\frac{\partial}{\partial \mu_{nk}} \log p(x_{nk} | h, c_n, \cdot) &=& \frac{x_{nk}}{\mu_{nk}} - 1 
\label{eqn:derivativepoisson}
\end{eqnarray}

\subsection{Estimating the overdispersion parameter $r$}
\label{sec:overdispersion}

We estimate the overdispersion parameter $r$ offline from segment read counts.
We assume that the majority of adjacent segments have the same genotype, and thus the same expected read depth $\gamma$.
Under this assumption, we identify the $r$ that maximize the likelihood of the read count data (Equation~\ref{eqn:inferrlikelihood}) for pairs of adjacent segments $i$ and $i+1$ with identical read depth $\gamma_i$.
We use gradient descent to find a local optima of the likelihood with respect to both $r$ and $\gamma_i$, with partial derivates with respect to $r$ and $\gamma_i$ calculated as given by Equations~\ref{eqn:inferrpartialr} and \ref{eqn:inferrpartialgamma} respectively.

\begin{eqnarray}
\ell(r,l,p,\gamma) &=& \sum_{i=1}^{\frac{N}{2}} \sum_{n=2i-1}^{2i} \log p(x_n | l_n, p_n, \gamma_i, r) \nonumber \\
&=&  \sum_{i=1}^{\frac{N}{2}} \sum_{n=2i-1}^{2i} \log{(\Gamma(x_n + r))} - \log(x_n !) - \log{(\Gamma(r))} + r \log{(r)} \nonumber \\ && - r \log{(r+l_n \gamma_i)} + x_n \log({l_n \gamma_i}) - x_n \log{(r+l_n \gamma_i)} 
\label{eqn:inferrlikelihood} \\
\frac{\partial \ell(r,l,p,\gamma)}{\partial \gamma_i} &=& \sum_{n=2i-1}^{2i} -\frac{r l_n}{r + l_n \gamma_i} + \frac{x_n}{\gamma_i} - \frac{x_n l_n}{r + l_n \gamma_i}
\label{eqn:inferrpartialgamma}\\
\frac{\partial \ell(r,l,p,\gamma)}{\partial r} &=& \sum_{i=1}^{\frac{N}{2}} \sum_{n=2i-1}^{2i} \psi(x_n + r) - \psi{(r)} + \log{(r)} + 1 \nonumber \\ && - \log{(r + l_n \gamma_i)} - \frac{r}{r + l_n \gamma_i} - \frac{x_n}{r + l_n \gamma_i} 
\label{eqn:inferrpartialr}
\end{eqnarray}

\subsection{Independence of Segment Read Counts}
\label{sec:segmentindependence}

Previously, \cite{Oesper:2013kq} modelled segment read counts as a single draw from a multinomial likelihood.
Suppose, in addition to the multinomial, we model the total number of reads $T$ measured by the sequencing experiment as a Poisson distributed random variable with unknown mean $\lambda$.
The joint likelihood of multinomial distributed read counts $x_j$ and total read count $T$ given proportions $\pi_i$ and expected total read count $\lambda$ can be written as given by Equation~\ref{eqn:multinomial}.
Introduce variables $\lambda_j$ such that $\lambda = \sum_j \lambda_j$ and $\pi_j = \frac{\lambda_j}{\lambda}$.
Then Equation~\ref{eqn:multinomial} can be rewritten as given by Equation~\ref{eqn:multinomial_poisson}, which is the likelihood of $J$ poisson distributed independent random variables with means $\lambda_j$ \cite{forster2010bayesian}.
Thus, our use of independent Poisson likelihoods can be seen as equivalent to the multinomial used by \cite{Oesper:2013kq} if we also assume the total read count of the experiment is a Poisson distributed random variable.
\begin{eqnarray}
p(T, x_1,..,x_J| \lambda, \pi_1,..,\pi_J) &\propto& e^{-\lambda} \lambda^T \prod_j \pi_j^{x_j} \label{eqn:multinomial} \\
p(x_1,...,x_J| \lambda_1,...,\lambda_J) &\propto& e^{-\lambda} \prod_j \lambda_j^{x_j} \label{eqn:multinomial_poisson}
\end{eqnarray}

\subsection{Parameters used for existing methods}
\label{sec:runtools}

\subsubsection{Theta2.0}

Theta requires an existing segmentation, thus we merged adjacent segments with identical copy number states to form a collection of perfect segments.
We then counted reads within those segments and used those read counts as input to Theta.
We used the \texttt{--NUM\_INTERVALS 15} option to allow for reasonable running time, and the \texttt{--FORCE} argument to for Theta to run on some of the genomes which were sub-optimal candidates for Theta analysis.
We then used octave to execute the \texttt{runBAFGaussianModel} function to select a single solution from potentially multiple optimal solutions.
Only the solution with 2 tumour clones was considered, as that is what was simulated.

\subsubsection{Titan}

As input to Titan, we calculated counts of reads contained within regular 1000bp segments, and counts reads supporting the reference and non-reference allele for heterozygous germline SNPs.
We then ran Titan, without correcting for GC and mappability (as this was not simulated).
We used multiple initializations, with normal contamination from 0 to 1 in increments of 0.1, and ploidy from 1 to 4 in increments of 1.
The number of Titan clusters was fixed at 2.
We then selected the solution with lowest \texttt{S\_Dbw validity index}.
Since the parameterization for Titan is slightly different than for other tools, we used the following formula to convert from estimates of tumour clone prevalences $t_1$ and $t_2$, and normal contamination estimate $n$, as follows.

\begin{eqnarray}
	\operatorname{mixture} = [n, (1-n) \times t_2, (1-n) \times |t_1 - t_2|]
\end{eqnarray}

\subsection{CloneHD}

As input to CloneHD, we calculated counts of reads contained within regular 1000bp segments, and counts reads supporting the reference and non-reference allele for heterozygous germline SNPs (b-allele frequency (BAF) data).
We used a series of steps as outlined in \texttt{run\_example.sh}.
\begin{enumerate}
	\item use filterhd to analyze the normal read depth data for technical read depth modulation
	\item use filterhd to analyze the tumour read depth data to get a benchmark of the log likelihood
	\item use filterhd to analyze the tumour read depth data with the bias estimate from the normal
	\item use filterhd to analyze the tumour BAF data
	\item use clonehd to infer copy number based on tumour read depth and BAF data.
\end{enumerate}

\bibliographystyle{plain}
\bibliography{demix}

\begin{thebibliography}{10}

\bibitem{Burrell:2013aa}
Rebecca~A Burrell, Sarah~E McClelland, David Endesfelder, Petra Groth,
  Marie-Christine Weller, Nadeem Shaikh, Enric Domingo, Nnennaya Kanu, Sally~M
  Dewhurst, Eva Gronroos, Su~Kit Chew, Andrew~J Rowan, Arne Schenk, Michal
  Sheffer, Michael Howell, Maik Kschischo, Axel Behrens, Thomas Helleday, Jiri
  Bartek, Ian~P Tomlinson, and Charles Swanton.
\newblock Replication stress links structural and numerical cancer chromosomal
  instability.
\newblock {\em Nature}, 494(7438):492--6, Feb 2013.

\bibitem{Cooper:2015aa}
Colin~S Cooper, Rosalind Eeles, David~C Wedge, Peter Van~Loo, Gunes Gundem,
  Ludmil~B Alexandrov, Barbara Kremeyer, Adam Butler, Andrew~G Lynch, Niedzica
  Camacho, Charlie~E Massie, Jonathan Kay, Hayley~J Luxton, Sandra Edwards,
  Zsofia Kote-Jarai, Nening Dennis, Sue Merson, Daniel Leongamornlert, Jorge
  Zamora, Cathy Corbishley, Sarah Thomas, Serena Nik-Zainal, Sarah O'Meara,
  Lucy Matthews, Jeremy Clark, Rachel Hurst, Richard Mithen, Robert~G Bristow,
  Paul~C Boutros, Michael Fraser, Susanna Cooke, Keiran Raine, David Jones,
  Andrew Menzies, Lucy Stebbings, Jon Hinton, Jon Teague, Stuart McLaren, Laura
  Mudie, Claire Hardy, Elizabeth Anderson, Olivia Joseph, Victoria Goody, Ben
  Robinson, Mark Maddison, Stephen Gamble, Christopher Greenman, Dan Berney,
  Steven Hazell, Naomi Livni, {ICGC Prostate Group}, Cyril Fisher, Christopher
  Ogden, Pardeep Kumar, Alan Thompson, Christopher Woodhouse, David Nicol, Erik
  Mayer, Tim Dudderidge, Nimish~C Shah, Vincent Gnanapragasam, Thierry Voet,
  Peter Campbell, Andrew Futreal, Douglas Easton, Anne~Y Warren, Christopher~S
  Foster, Michael~R Stratton, Hayley~C Whitaker, Ultan McDermott, Daniel~S
  Brewer, and David~E Neal.
\newblock Analysis of the genetic phylogeny of multifocal prostate cancer
  identifies multiple independent clonal expansions in neoplastic and
  morphologically normal prostate tissue.
\newblock {\em Nat Genet}, 47(4):367--72, Apr 2015.

\bibitem{Delaneau:2012qd}
Olivier Delaneau, Jonathan Marchini, and Jean-Fran{\c c}ois Zagury.
\newblock A linear complexity phasing method for thousands of genomes.
\newblock {\em Nat Methods}, 9(2):179--81, Feb 2012.

\bibitem{Fischer:2014zl}
Andrej Fischer, Ignacio V{\'a}zquez-Garc{\'\i}a, Christopher J~R Illingworth,
  and Ville Mustonen.
\newblock High-definition reconstruction of clonal composition in cancer.
\newblock {\em Cell Rep}, 7(5):1740--52, Jun 2014.

\bibitem{forster2010bayesian}
Jonathan~J Forster.
\newblock Bayesian inference for poisson and multinomial log-linear models.
\newblock {\em Statistical Methodology}, 7(3):210--224, 2010.

\bibitem{Ha:2014fr}
Gavin Ha, Andrew Roth, Jaswinder Khattra, Julie Ho, Damian Yap, Leah~M
  Prentice, Nataliya Melnyk, Andrew McPherson, Ali Bashashati, Emma Laks,
  Justina Biele, Jiarui Ding, Alan Le, Jamie Rosner, Karey Shumansky, Marco~A
  Marra, C~Blake Gilks, David~G Huntsman, Jessica~N McAlpine, Samuel Aparicio,
  and Sohrab~P Shah.
\newblock Titan: inference of copy number architectures in clonal cell
  populations from tumor whole-genome sequence data.
\newblock {\em Genome Res}, Jul 2014.

\bibitem{kolmogorov2009blossom}
Vladimir Kolmogorov.
\newblock Blossom v: a new implementation of a minimum cost perfect matching
  algorithm.
\newblock {\em Mathematical Programming Computation}, 1(1):43--67, 2009.

\bibitem{Mahmoody:2012ve}
Ahmad Mahmoody, Crystal~L Kahn, and Benjamin~J Raphael.
\newblock Reconstructing genome mixtures from partial adjacencies.
\newblock {\em BMC Bioinformatics}, 13 Suppl 19:S9, 2012.

\bibitem{Oesper:2013kq}
Layla Oesper, Ahmad Mahmoody, and Benjamin~J Raphael.
\newblock Theta: inferring intra-tumor heterogeneity from high-throughput dna
  sequencing data.
\newblock {\em Genome Biol}, 14(7):R80, 2013.

\bibitem{Oesper:2012vn}
Layla Oesper, Anna Ritz, Sarah~J Aerni, Ryan Drebin, and Benjamin~J Raphael.
\newblock Reconstructing cancer genomes from paired-end sequencing data.
\newblock {\em BMC Bioinformatics}, 13 Suppl 6:S10, 2012.

\bibitem{Oesper:2014fj}
Layla Oesper, Gryte Satas, and Benjamin~J Raphael.
\newblock Quantifying tumor heterogeneity in whole-genome and whole-exome
  sequencing data.
\newblock {\em Bioinformatics}, Oct 2014.

\bibitem{Zerbino:2013pd}
Daniel~R. Zerbino, Benedict Paten, Glenn Hickey, and David Haussler.
\newblock An algebraic framework to sample the rearrangement histories of a
  cancer metagenome with double cut and join, duplication and deletion events.
\newblock {\em arXiv}, 03 2013.

\end{thebibliography}

\end{document}